\documentclass[acmsmall]{acmart}
\usepackage{booktabs}
\usepackage{hyperref}
\usepackage{cleveref}
\usepackage{algorithm}
\usepackage{algpseudocode}
\usepackage{subfig}
\algblock{Input}{EndInput}
\algnotext{EndInput}
\usepackage{enumitem}
\AtBeginDocument{%
  \providecommand\BibTeX{{%
    \normalfont B\kern-0.5em{\scshape i\kern-0.25em b}\kern-0.8em\TeX}}}


\acmJournal{JACM}
\acmVolume{37}
\acmNumber{4}
\acmArticle{111}
\acmMonth{8}



\begin{document}

\setcopyright{acmcopyright}
\acmJournal{TOIT}
\acmYear{2020} \acmVolume{1} \acmNumber{1} \acmArticle{1} \acmMonth{1} \acmPrice{15.00}\acmDOI{10.1145/3389249}

\title{A Blockchain-based Iterative Double Auction Protocol using Multiparty State Channels}

\author{Truc D. T. Nguyen}
\email{truc.nguyen@ufl.edu}
\orcid{0000-0002-5836-5884}
\affiliation{%
	\institution{University of Florida}
	\city{Gainesville}
	\state{Florida}
	\postcode{32611}
}
\author{My T. Thai}
\authornote{My T. Thai is the corresponding author.}
\email{mythai@cise.ufl.edu}
\affiliation{%
  \institution{University of Florida}
  \city{Gainesville}
  \state{Florida}
  \postcode{32611}
}

%
%
%
%
%
%


\begin{abstract}
	Although the iterative double auction has been widely used in many different applications, one of the major problems in its current implementations is that they rely on a trusted third party to handle the auction process. This imposes the risk of single point of failures, monopoly, and bribery. In this paper, we aim to tackle this problem by proposing a novel decentralized and trustless framework for iterative double auction based on blockchain. Our design adopts the smart contract and state channel technologies to enable a double auction process among parties that do not need to trust each other, while minimizing the blockchain transactions. In specific, we propose an extension to the original concept of state channels that can support multiparty computation. Then we provide a formal development of the proposed framework and prove the security of our design against adversaries. Finally, we develop a proof-of-concept implementation of our framework using Elixir and Solidity, on which we conduct various experiments to demonstrate its feasibility and practicality.

\keywords{Blockchain \and iterative double auction \and trustless \and state channel}
\end{abstract}

\begin{CCSXML}
<ccs2012>
<concept>
<concept_id>10002978.10003022</concept_id>
<concept_desc>Security and privacy~Software and application security</concept_desc>
<concept_significance>500</concept_significance>
</concept>
<concept>
<concept_id>10002978.10003006.10003013</concept_id>
<concept_desc>Security and privacy~Distributed systems security</concept_desc>
<concept_significance>500</concept_significance>
</concept>
<concept>
<concept_id>10010520.10010521.10010537.10010540</concept_id>
<concept_desc>Computer systems organization~Peer-to-peer architectures</concept_desc>
<concept_significance>300</concept_significance>
</concept>
</ccs2012>
\end{CCSXML}

\ccsdesc[500]{Security and privacy~Software and application security}
\ccsdesc[500]{Security and privacy~Distributed systems security}
\ccsdesc[300]{Computer systems organization~Peer-to-peer architectures}

\keywords{iterative double auction, blockchain, state channel, trustless}

\maketitle

\section{Introduction}
Blockchain, the technology that underpins the great success of Bitcoin \cite{nakamoto2008bitcoin} and various other cryptocurrencies, has incredibly emerged as a trending research topic in both academic institutes and industries associations in recent years. With great potential and benefits, the blockchain technology promises a new decentralized platform for the economy such that the possibility of censorship, monopoly, and single point of failures can be eliminated \cite{swan2015blockchain}. The technology, in its simplest form, can be seen as a decentralized database or digital ledger that contains append-only data blocks where each block is comprised of valid transactions, timestamp and the cryptographic hash of the previous block. By design, a blockchain system is managed by nodes in a peer-to-peer network and operates efficiently in a decentralized fashion without the need of a central authority. Specifically, it enables a trustless network where participants of the system can settle transactions without having to trust each other. With the aid of the smart contracts technology, a blockchain system can enable a wide range of applications that go beyond financial transactions \cite{wood2014ethereum}. In the context of blockchain, \textit{smart contracts} are defined as self-executing and self-enforcing programs that are stored on chain. They are intended to facilitate and verify the execution of terms and conditions of a contract within the blockchain system. By employing this technology, applications that previously require a trusted intermediary can now operate in a decentralized manner while achieving the same functionality and certainty. For that reason, blockchain and smart contracts together have inspired many decentralized applications and stimulated scientific research in diverse domains \cite{nguyen2019optchain,saad2019partition,azaria2016medrec,dinh2018ai,kang2017enabling,aitzhan2016security,nguyen2018leveraging}. 


An auction is a market institution in which traders or parties submit bids that can be an offer to buy or sell at a given price \cite{friedman1993double}. A market can enable only buyers, only sellers, or both to make offers. In the latter case, it is referred as a two-sided or \textit{double auction}. A double auction process can be one-shot or iterative (repeated). The difference between them is that an iterative double auction process has multiple, instead of one, iterations \cite{parsons2006everything}. In each iteration, each party submits a bid illustrating the selling/buying price and supplying/demanding units of resource. This process goes on until the market reaches Nash Equilibrium (NE). In practice, the iterative double auction has been widely used for decentralized resource allocations among rational traders, especially for divisible resources, such as energy trading \cite{faqiry2016double,kang2017enabling}, mobile data offloading \cite{iosifidis2013iterative}, or resource allocation in autonomous networks \cite{iosifidis2010double}. In these applications, in each iteration, players submit their individual bids, respectively, to an auctioneer who later calculates to determine the resource allocation with respect to the submitted bids. However, current implementations of double auction systems require a centralized and trusted auctioneer to regulate the auction process. This results in the risk of single point of failures, monopoly, and bribery.

Although many research work have tried to develop a trading system combining the iterative double auction and blockchain \cite{kang2017enabling,wang2018decentralized}, nonetheless, they still need a trusted third-party to handle the auction process. In this work, we leverage blockchain and smart contracts to propose a general framework for iterative double auction that is completely \textit{decentralized} and \textit{trustless}. We argue that, due to the low throughput of blockchain, a naive and straightforward adoption of blockchain smart contracts to eliminate the trusted third-party would result in significantly high latency and transaction fees. To overcome this problem, we adopt the \textit{state channel} technology \cite{Dziembowski:2018:GSC:3243734.3243856} with extension to support computation among more than two parties, so that we can enable efficient off-chain execution of decentralized applications without changing the trust assumption. Specifically, we propose a double auction framework operating through a \textit{state channel} that can be coupled to existing double auction algorithms to run the auction process efficiently.

To demonstrate the feasibility and practicality of our solution, we develop a proof-of-concept implementation of the proposed solution. This proof-of-concept is built based on our novel development framework that can be used to deploy any distributed protocols using blockchain and state channels, which is also introduced in this paper. Based on the proof-of-concept, we conduct experiments and measure the performance in various aspects, the results suggest that our proposed solution can carry out a double auction process on blockchain that is both time- and cost-saving with a relatively small overhead.

\paragraph{Contributions.} We summarize our contributions as follows: 
\begin{itemize}
	\item We introduce a novel decentralized and trustless framework for iterative double auction based on blockchain. With this framework, existing double auction algorithms can efficiently run on a blockchain network without suffering the high latency and fees of on-chain transactions.
	\item We enhance the state channel technology, which is currently limited to two participants, to support multiparty computation. Based on this enhancement, we present a formal development of our solution, in which we develop a Universally Composable (UC)-style model \cite{canetti2001universally} for the double auction protocol and prove the security properties of our design using the simulation-based UC framework.
	\item To validate our proposed solution, we develop a proof-of-concept implementation of the framework to demonstrate its feasibility. For this implementation, we also introduce a novel development framework for distributed computing using blockchain and state channel. The framework is developed using the Elixir programming language and the system can be deployed on an Ethereum blockchain \cite{ethereum}.
\end{itemize}

\paragraph{Organization.} The remainder of this paper is organized as follows. The related work is summarized in Section \ref{sec:related}. Section \ref{sec:doubleauctionblockchain} discusses the integration of Blockchain and double auction to establish some security goals for the system. We first present a straw-man design and then provide a high-level view of our framework. Then, we provide formal security definitions and specifications with a detailed security analysis of our framework in Section \ref{sec:state}. Section \ref{sec:eval} presents the proof-of-concept implementation along with the system evaluation. Finally, Section \ref{sec:conclude} concludes the paper.

\section{Related work} \label{sec:related}
\paragraph{Double auction based on blockchain.}
As blockchain is an emerging technology, there has been many research work addressing double auction with blockchain. Recently, Thakur et al \cite{thakur2018distributed} published a paper on distributed double auction for peer to peer energy trading. The authors use the McAfee mechanism to process the double auction on smart contracts. In \cite{mingblockcloud}, the authors presented BlockCloud, which is a service-centric blockchain architecture that supports double auction. The auction model in this work uses a trade-reduction mechanism. However, the double auction mechanism in these work is one-shot and is only applicable to single-unit demands. For applications like energy or wireless spectrum allocation, these models greatly limit users' capability to utilize the products \cite{sun2014sprite}.

In \cite{kang2017enabling} and \cite{wang2018decentralized} , the authors propose blockchain-based energy trading using double auction. The auction mechanism is implemented as an iterative process which can be used for divisible goods. Although the system presented in these papers employs blockchain, the double auction process is still facilitated by a central entity. The blockchain is only used for settling payments. Our work is fundamentally different as we aim to design a framework that can regulate the iterative double auction process in a decentralized and trustless fashion.

\paragraph{State channel.} Although there has been many research effort on payment channels \cite{malavolta2017concurrency,miller2017sprites,dziembowski2019perun}, the concept of state channel has only emerged in recent years. As payment channel is limited to payment transactions, state channel is a generalization of payment channel in which users can execute complex smart contracts off-chain while still maintaining the trustless property. Instead of executing the contracts on-chain and having all the transactions validated by every blockchain nodes, the state channel technologies allow users to update the states off-chain with the option to raise disputes on-chain. Thus, it offers an efficient solution to address the scalability issue of blockchain systems. Dziembowski et al. \cite{Dziembowski:2018:GSC:3243734.3243856} is the first work that present formal specifications of state channels. However, the authors did not develop any proof-of-concept implementation to validate their protocol.

One problem with the original concept of state channels is that they only support execution between two parties, which is not applicable to our scenario since we are dealing with a system of multiple parties. For that reason, based on the work in \cite{Dziembowski:2018:GSC:3243734.3243856}, we extend the state channel technology to a multiparty state channel that can support computation among multiple users. Our extension also provides additional functionalities to handle dynamic changes of system participants. Based on this multiparty state channel, we design our double auction framework and analyze its security properties in the UC model.

\section{Double auction with blockchain} \label{sec:doubleauctionblockchain}
In this section, we formally define the double auction model that is used in this work. Moreover, beginning with a straw-man design, we present the high-level design of our framework and its security goals.
\subsection{Auction model}
We consider a set of parties that are connected to a blockchain network. We divide the set of parties into a set $\mathcal{B}$ of buyers who require resources from a set $\mathcal{S}$ of sellers. These two sets are disjoint. The demand of a buyer $i \in \mathcal{B}$ is denoted as $d_i$ and the supply of a seller $j \in \mathcal{S}$ is denoted as $s_j$. In this work, we adopt the auction model proposed in \cite{zou2016efficient}, which elicits hidden information about parties in order to maximize social welfare, as a general iterative double auction process that converges to a  Nash Equilibrium (NE).

A bid profile of a buyer $i \in \mathcal{B}$ is denoted as $b_i = (\beta_i, x_i)$ where $\beta_i$ is the buying price per unit of resource and $x_i$ is the amount of resource that $i$ wants to buy. Likewise, a bid profile of a seller $j \in \mathcal{S}$ is denoted as $b_j = (\alpha_j, y_j)$ where $\alpha_j$ is the selling price per unit of resource and $y_j$ is the amount of resource that $j$ wants to supply.

The auction process consists of multiple iterations. At an iteration $k$, the buyers and sellers submit their bid profiles $b^{(k)}_i$ and $b^{(k)}_j$, respectively, to the auctioneer. Then, a double auction algorithm will be used to determine the best response $b^{(k+1)}_i$ and $b^{(k+1)}_j$ for the next iteration. This process goes on until the auction reaches NE, at which the bid demand and supply $(x_i, y_j)$ will converge to an optimal value that maximizes the social welfare. An example of such algorithm can be referred to \cite{zou2016efficient}. The pseudo code for a centralized auctioneer is presented in \Cref{algo:sc}.

\begin{algorithm}[h]
	\caption{}\label{algo:sc}
	\begin{algorithmic}
		\State $k \gets 1$
		\While {not NE}
		\State Receive bid profiles $b^{(k)}_i$ and $b^{(k)}_j$ from buyers and sellers
		\State Compute best responses $b^{(k+1)}_i$ and $b^{(k+j)}_j$
		\State Send $b^{(k+1)}_i$ and $b^{(k+1)}_j$ back to sellers and buyers
		\State k $\gets$ k + 1
		\EndWhile
	\end{algorithmic}
\end{algorithm}

\subsection{Straw-man design}\label{sec:strawman}
In this section, we present a design of the trading system. The trading mechanism must meet the following requirements:
\begin{enumerate}[label=(\alph*)]
	\item Decentralized: the auction process is not facilitated by any central middleman
	\item Trustless: the parties do not have to trust each other.
	\item Non-Cancellation: parties may attempt to prematurely abort from the protocol to avoid payments. These malicious parties must be financially penalized.
\end{enumerate}

Based on the requirements, we will first show a straw-man design of the system which has some deficiencies in terms of latency and high transaction fee. Then, we will propose a trading system using state channels to address those problems.

In this system, we deploy a smart contract to the blockchain to regulate the trading process. Prior to placing any bid, all parties must make a deposit to the smart contract. If a party tries to cheat by prematurely aborting the trading process, he or she will lose that deposit and the remaining parties will receive compensation. Therefore, the deposit deters parties from cheating. At the end of the trading process, these deposits will be returned to the parties.

In this straw-man design, the auction process will be executed on-chain, that is, the smart contract will act as an auctioneer and thus will execute \Cref{algo:sc}. As the auction process consists of multiple iterations, the system will follow the activity diagram in \Cref{fig:paper} at each iteration.

\begin{figure}
	\centering
	\includegraphics[width=0.5\linewidth]{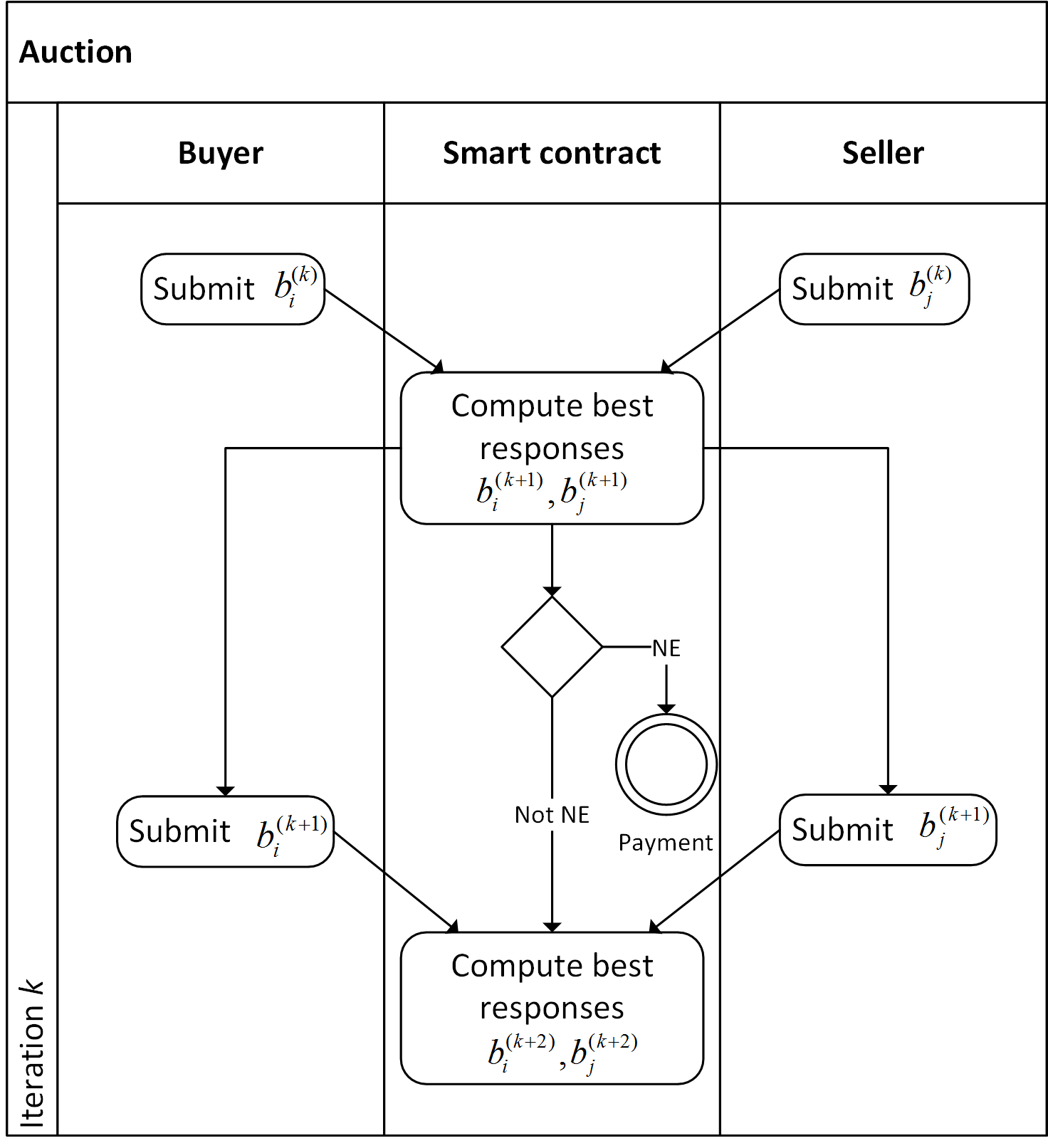}
	\caption{Auction phase}
	\label{fig:paper}
\end{figure}

At an iteration $k$, all buyers and sellers submit their bids $b^{(k)}_i$ and $b^{(k)}_j$, respectively, to the smart contract. In order to avoid unresponsiveness, a timeout is set for collecting bids. Should any parties fail to meet this deadline, the system considers that they aborted the process.

The smart contract then determines the best response $b^{(k+1)}_i$ and $b^{(k+j)}_j$ for buyers and sellers, respectively, until the trading system reaches NE. This design works, however, has two main disadvantages:
\begin{enumerate}
	\item Transaction latency: each message exchanged between parties and the smart contract is treated as a blockchain transaction which takes time to get committed.
	\item High computational complexity on the smart contract which means that the blockchain will require high transaction fees.
\end{enumerate}
These disadvantages come from the fact that a transaction in a blockchain network has to be confirmed by all the validators. In other words, with this design, the buyers and sellers are having the entire blockchain to process their auction, which is very inefficient and, thus, not practical. In the following section, we will propose another design to overcome these issues.

\subsection{Blockchain with multiparty state channels}
As the double auction process involves multiple iterations, state channel \cite{Dziembowski:2018:GSC:3243734.3243856} is a proper solution to address the deficiencies of the straw-man design. Instead of processing the auction on-chain, the parties will be able to update the states of the auction off-chain. Whenever something goes wrong (e.g., some parties try to cheat), the users always have the option of referring back to the blockchain for the certainty of on-chain transactions. Since the original concept of state channel only supports the computation between two parties, in this work, we propose an extension to support multiparty computation that can work with the double auction. This section illustrates a high-level view of how we can conduct a double auction using the multiparty state channel.

In the same manner as the straw-man design, the parties deploy a smart contract to the blockchain. However, this smart contract does not regulate the auction process, but instead acts as a judge to resolve disputes. The parties must also make a deposit to this contract prior to the auction. \Cref{fig:picture1} illustrates the overview of the operations in the multiparty state channel.

\begin{figure}
	\centering
	\includegraphics[width=0.8\linewidth]{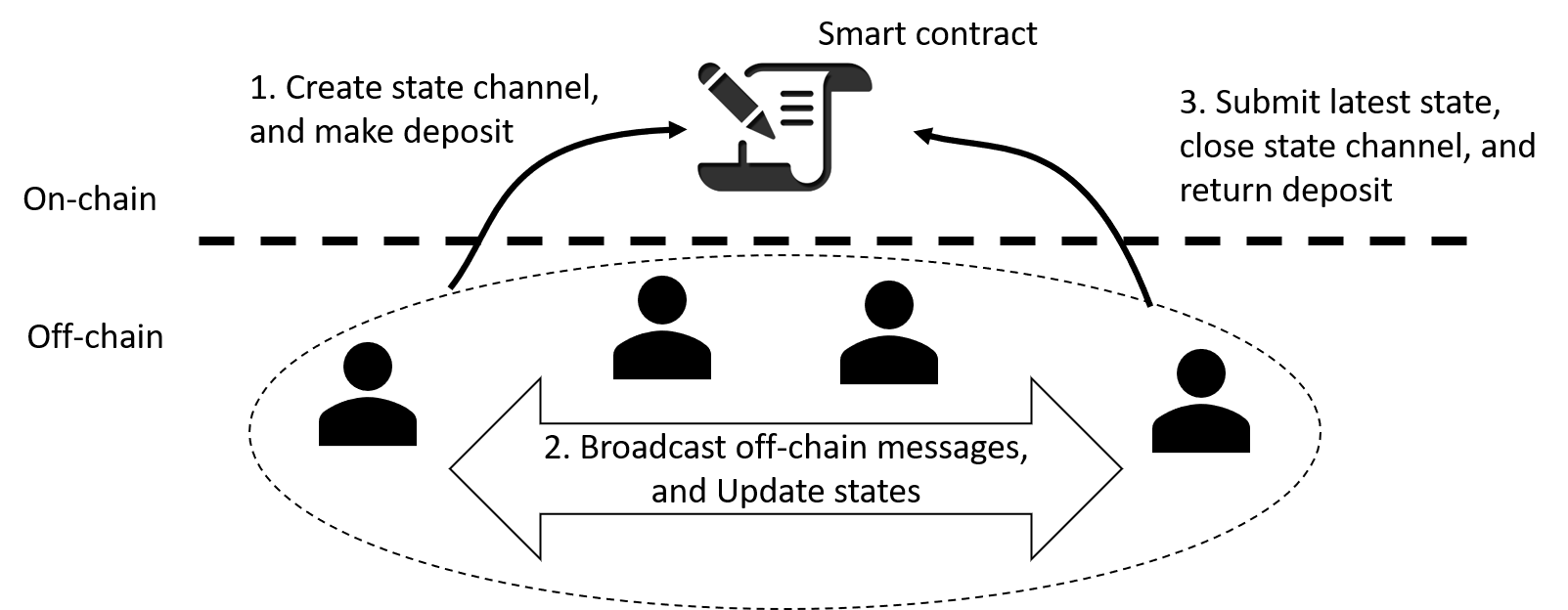}
	\caption{Multiparty state channel: overview}
	\label{fig:picture1}
\end{figure}

\begin{figure*}[ht!]
	\centering
	\includegraphics[width=0.9\linewidth]{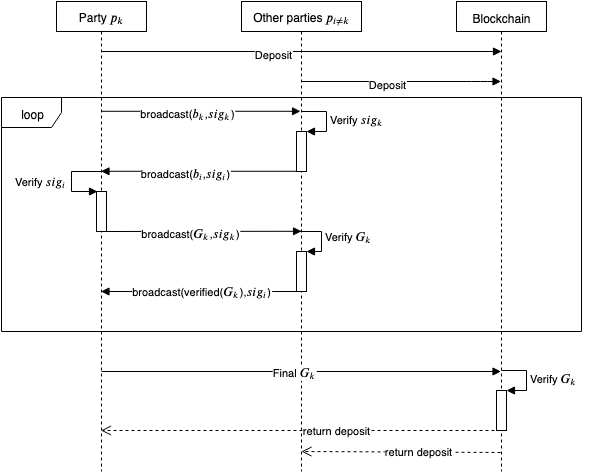}
	\caption{Sequence diagram of the double auction process. Here we single out one party $p_k$ to elucidate the interactions among parties}
	\label{fig:seq}
\end{figure*}

After deploying the smart contract, the parties can now begin the auction process in a \textit{state channel}. At each iteration $k$ of the auction, we define two operations: (1) collecting bids and (2) determining the best responses. Denoting the set of parties as $\mathcal{P} = \mathcal{B} \cup \mathcal{S} = \{p_1, p_2, ..., p_n\}$, in the first operation, each party broadcasts a blockchain transaction containing its bid $b^{(k)}_{p_i}$ to all other parties. Note that this transaction is a \textit{valid} blockchain transaction and it is only broadcasted locally among the parties. Upon receiving that transaction, each party has to verify the transaction's signature. After this operation, each party now has the bid profiles of all other parties. 

Then we move to the second operation of determining the best response. A party will be chosen to compute the best responses, in fact, it does not matter who will execute this computation because the results will later be verified by other parties. Therefore, this party can be chosen randomly or based on the amount of deposit to the smart contract. Let $p_k$ be the one who carries out the computation at iteration $k$, $G_k$ be the result that consists of the best response $b^{(k+1)}_{p_i}$ for each party $p_i$, $p_k$ will broadcast a blockchain transaction containing $G_k$ to all other parties. Upon receiving this transaction, each party has to verify the result $G_k$, then signs it and broadcasts another transaction containing $G_k$ to all other parties. This action means that the party agrees with $G_k$. After this step, each party will have $G_k$ together with the signatures of all parties.

When the auction process reaches NE, a party will send the final $G_k$ together with all the signatures to the smart contract. The smart contract then verifies the $G_k$ and if there is no dispute, the state channel is closed. Finally, the payment will be processed on-chain and the smart contract refunds the initial deposit to all parties. The entire process is summarized in the sequence diagram in \Cref{fig:seq}.

As can be seen, the blockchain is invoked only two times and thus saves tons of transaction fees comparing to the straw-man design. Moreover, as the transactions are not sent to the blockchain, the latency is only limited by the communication network among the parties. We can also see that the bid profiles are only known among the involving parties, not to the entire blockchain, thus enhances the privacy. In this section, we only provide a very abstract workflow of the system without considering the security and privacy. In the following sections, we provide more detail operations of the proposed protocol as well as how we ensure the system is secured.

\subsection{Security and privacy goals}
Before we present the formal development of our solution, we establish the threat model as well as security and privacy goals for our system. In this work, we consider a computationally efficient adversary who can corrupt any subset of parties. By corruption, the attacker can take full control over a party which includes acquiring the internal state and all the messages destined to that party. Moreover, it can send arbitrary messages on the corrupted party's behalf.

With respect to the adversarial model, we define the security and privacy notions of interest as follows:
\begin{itemize}
	\item Unforgeability: We use the ECDSA signature scheme which is believed to be unforgeable against chosen message attack \cite{johnson2001elliptic}. This signature scheme is currently being used by the Ethereum blockchain \cite{wood2014ethereum}.
	\item Non-Repudiation: Once a party has submitted a bid, the system assures that he or she must not be able to deny having made the relevant bid. 
	\item Public Verifiability: All parties can be publicly verified if they have been following the auction protocol correctly.
	\item Robustness: The auction process can tolerate invalid bids and dishonest participants who are not following the auction protocol correctly.
	\item Input independence: Each party does not see others' bid before committing to their own.
	\item Liveness: In an optimistic case when all parties are honest, the computation is processed within a small amount of time (off-chain messages only). When some parties are corrupted, the computation is completed within a predictable amount of time.
\end{itemize}


\section{Double auction using multiparty state channels}\label{sec:state}
In this section, we describe the ideal functionality of our system that defines how a double auction process is operated using the multiparty state channel technology. We show that the ideal functionality achieves the security goals. Afterwards, we present the design of our protocol that realizes the ideal functionality. Finally, a detailed security proof in the UC framework is given.

\subsection{Security model} \label{sec:model}
The entities in our system are modeled as interactive Turing machines that communicate with each other via a secure and authenticated channel. The system operates in the presence of an adversary $\mathcal{A}$ who, upon corruption of a party $p$, seizes the internal state of $p$ and all the incoming and outgoing packets of $p$.
\subsubsection{Assumptions and Notation}
We denote $\mathcal{P} = \mathcal{B} \cup \mathcal{S} = \{p_1, p_2, ..., p_n\}$ as the set of $n$ parties. We assume that $\mathcal{P}$ is known before opening the state channel and $|\mathcal{P}| \geq 2$. The blockchain is represented as an append-only ledger $\mathcal{L}$ that is managed by a global ideal functionality $\mathcal{F}_\mathcal{L}$ (such as \cite{Dziembowski:2018:GSC:3243734.3243856}). The state of $\mathcal{F}_\mathcal{L}$ is defined by the current balance of all accounts and smart contracts' state; and is publicly visible to all parties. $\mathcal{F}_\mathcal{L}$ supports the functionalities of adding or subtracting one's balance. We also denote $\mathcal{F}(x)$ as retrieving the current value of the state variable $x$ from an ideal functionality $\mathcal{F}$.

We further assume that any message destined to $\mathcal{F}_\mathcal{L}$ can be seen by all parties (in the same manner as blockchain transactions are publicly visible).  For simplicity, we assume that all parties have enough fund in their accounts for making deposits to the smart contract. Furthermore, each party and the ideal functionality will automatically discard any messages originated from a party that is not in $\mathcal{P}$ or the message's signature is invalid.

\subsubsection{Communication} 
In this work, we assume a synchronous communication network. We define a \textit{round} as a unit of time corresponding to the maximum delay needed to transmit an off-chain message between a pair of parties. Any modifications on $\mathcal{F}_\mathcal{L}$ and smart contracts take at most $\Delta \in \mathbb{N}$ rounds, this $\Delta$ reflects the fact that updates on the blockchain are not instant but can be completed within a predictable amount of time. Furthermore, each party can retrieve the current state of $\mathcal{F}_\mathcal{L}$ and smart contracts in one round.

\subsubsection{Commitment scheme}
One cryptographic primitive that is used in our model is the commitment scheme. A commitment scheme consists of two following algorithms $(Com, Vrf)$:
\begin{itemize}
    \item $Com(m, r)$: given a message $m$, random nonce $r$, returns commitment $c$
    \item $Vrf(c, R)$: given a commitment $c$, and $R \triangleq (m, r)$, where $m$ is a message, $r$ is a random nonce, returns $1$ iff $c = Com(m, r)$, otherwise returns $0$.
\end{itemize}
We assume that there is no adversary $\mathcal{A}$ that can generate a commitment $c$ and the tuples $(m, r)$, $(m', r')$, such that $c = Com(m, r) = Com(m', r')$. Simply speaking, a party cannot alter the value after they have committed to it.

\subsection{Ideal functionality}
First, we define the ledger's ideal functionality $\mathcal{F}_\mathcal{L}$. Based on \cref{sec:model}, the $\mathcal{F}_\mathcal{L}$ supports adding and subtracting one's balance, hence, we give the corresponding definition in \Cref{fig:fl}.

\begin{figure}[!htbp]
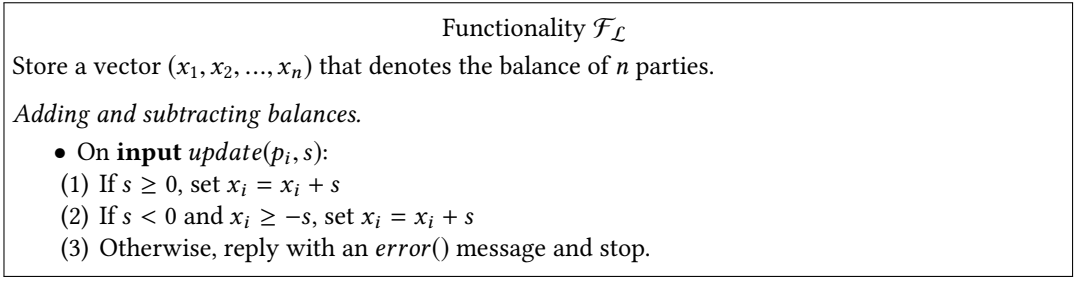

	\noindent\fbox{%
		\parbox{\columnwidth}{%
			\begin{center}
				Functionality $\mathcal{F}_\mathcal{L}$
			\end{center}
			
			Store a vector $(x_1, x_2, ..., x_n)$ that denotes the balance of $n$ parties.
			
			\paragraph{Adding and subtracting balances}
			\begin{itemize}
				\item On \textbf{input} $update(p_i, s)$:
				\begin{enumerate}
					\item If $s \geq 0$, set $x_i = x_i + s$
					\item If $s < 0$ and $x_i \geq -s$, set $x_i = x_i + s$
					\item Otherwise, reply with an $error()$ message and stop.
				\end{enumerate}
			\end{itemize}
		}
	}
	\caption{Ledger's functionality $\mathcal{F}_\mathcal{L}$}
	\label{fig:fl}
\end{figure}

The formal definition of the ideal functionality $\mathcal{F}_{auction}$ is presented in \Cref{fig:ideal}. As can be seen, it supports the following functionalities:
\begin{itemize}
	\item Open channel
	\item Determine best response
	\item Revocation
	\item Close channel
\end{itemize}

\begin{figure}
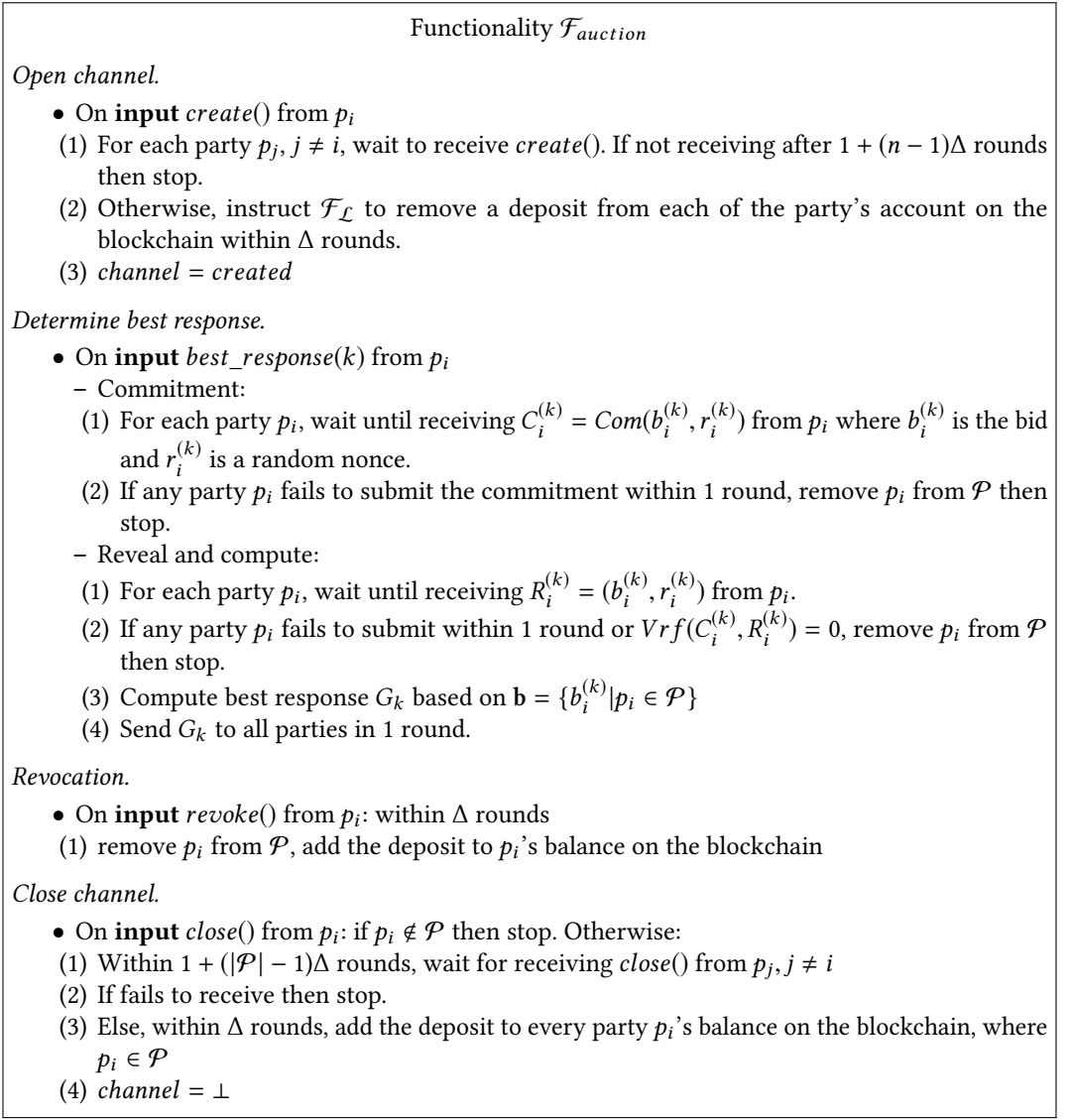

	\noindent\fbox{%
		\parbox{\columnwidth}{%
			\begin{center}
				Functionality $\mathcal{F}_{auction}$
			\end{center}
			
			\paragraph{Open channel}
			\begin{itemize}
				\item On \textbf{input} $create()$ from $p_i$
				\begin{enumerate}
					\item For each party $p_j$, $j \neq i$, wait to receive $create()$. If not receiving after $1+(n-1)\Delta$ rounds then stop.
					\item Otherwise, instruct $\mathcal{F}_\mathcal{L}$ to remove a deposit from each of the party's account on the blockchain within $\Delta$ rounds.
					\item $channel=created$
				\end{enumerate}
			\end{itemize}
			
			\paragraph{Determine best response}
			\begin{itemize}
				\item On \textbf{input} $best\_response(k)$ from $p_i$
				\begin{itemize}
					\item Commitment: 
					\begin{enumerate}
						\item For each party $p_i$, wait until receiving $C_i^{(k)} = Com(b_i^{(k)}, r_i^{(k)})$ from $p_i$ where $b_i^{(k)}$ is the bid and $r_i^{(k)}$ is a random nonce.
						\item If any party $p_i$ fails to submit the commitment within 1 round, remove $p_i$ from $\mathcal{P}$ then stop.
					\end{enumerate}
					\item Reveal and compute:
					\begin{enumerate}
						\item For each party $p_i$, wait until receiving $R_i^{(k)} = (b_i^{(k)}, r_i^{(k)})$ from $p_i$. 
						\item If any party $p_i$ fails to submit within 1 round or $Vrf(C_i^{(k)}, R_i^{(k)}) = 0$, remove $p_i$ from $\mathcal{P}$ then stop.
						\item Compute best response $G_k$ based on $\mathbf{b} = \{b^{(k)}_i | p_i \in \mathcal{P}\}$
						\item Send $G_k$ to all parties in 1 round.
					\end{enumerate}
				\end{itemize}
			\end{itemize}
			
			\paragraph{Revocation}
			\begin{itemize}
				\item On \textbf{input} $revoke()$ from $p_i$: within $\Delta$ rounds 
				\begin{enumerate}
					\item remove $p_i$ from $\mathcal{P}$, add the deposit to $p_i$'s balance on the blockchain
				\end{enumerate}
			\end{itemize}
			
			\paragraph{Close channel}
			\begin{itemize}
				\item On \textbf{input} $close()$ from $p_i$: if $p_i \notin \mathcal{P}$ then stop. Otherwise:
				\begin{enumerate}
					\item Within $1+(|\mathcal{P}| - 1$)$\Delta$ rounds, wait for receiving $close()$ from $p_j, j\neq i$
					\item If fails to receive then stop.
					\item Else, within $\Delta$ rounds, add the deposit to every party $p_i$'s balance on the blockchain, where $p_i \in \mathcal{P}$
					\item $channel=\bot$
				\end{enumerate}
			\end{itemize}
		}%
	}
	\caption{Ideal functionality $\mathcal{F}_{auction}$}
	\label{fig:ideal}
\end{figure}

As indicated in \cite{Dziembowski:2018:GSC:3243734.3243856}, a state channel should be able to guarantee the consensus on creation and closing. The state channel creation is initiated by receiving a $create()$ message from a party. The functionality then waits for receiving $create()$ from all other parties within $1+ (n-1)\Delta$ rounds. If this happens then the functionality removes a deposit from each party's balance on the blockchain. Since all parties have to send the $create()$ message, we achieve the consensus on creation.

Each iteration $k$ of the double auction process starts with receiving the $best\_response(k)$ message from a party. Then all parties must submit a commitment of their bids. After that, all parties must submit the true bid that matches with the commitment they sent before. Any party fails to submit in time or does not submit the true bid will be eliminated from the double auction process. With the commitment step, one party cannot see the other parties' bid prior to placing his or her own bid, this satisfies the \textit{Input independence}.

When one party fails to behave honestly, it will be eliminated from the auction process and will not receive the deposit back. A party can voluntarily abort an auction process by sending a $revoke()$ message and it will receive the deposit back.  Then, the auction can continue with the remaining parties. Therefore, the functionality satisfies the \textit{Robustness}. Moreover, a malicious party cannot delay the execution of the protocol for an arbitrary amount of time, because after timeout, the execution still proceeds. In the best case, when everyone behaves honestly and does not terminate in the middle of the auction process, the computation is processed within $O(1)$ rounds, otherwise, $O(\Delta)$ rounds. Thus, the \textit{Liveness} is satisfied.

In the end, the state channel begins its termination procedure upon receiving a $close()$ message from a party. Next, it awaits obtaining the $close()$ messages from the remaining parties within $1+(|\mathcal{P}| - 1$)$\Delta$ rounds. If all the parties are unanimous in closing the state channel, the functionality returns the deposit back to all parties' account. Otherwise, the state channel remains open. As all parties have to send the $close()$ message to close the state channel, we achieve the consensus on closing.

\subsection{Protocol for double auction state channel}

This section discusses in details the double auction protocol based on state channel that realizes the $\mathcal{F}_{auction}$. The protocol includes two main parts: 1) a Judge contact and 2) Off-chain protocol.

\paragraph{Judge contract.}
The main functionality of this contract is to regulate the state channel and handle disputes.  Every party is able to submit a \textit{state} that everyone has agreed on to this contract. However, the contract only accepts the state with the highest version number. Once a party submits a state $G$, the contract will wait for some deadline $T$ for other parties to raise disputes. See \Cref{fig:judge} for the functionality of the Judge contract. Note that the contract $\mathcal{F}_{Judge}$ has a state variable $channel$ which indicates whether the channel is opened or not. If the channel is not opened ($channel=\bot$), the three functionalities "State submission", "Revocation", and "Close channel" cannot be executed. In the same manner, if the channel is already opened ($channel=created$) then the functionality "Open channel" cannot be executed.

\begin{figure}
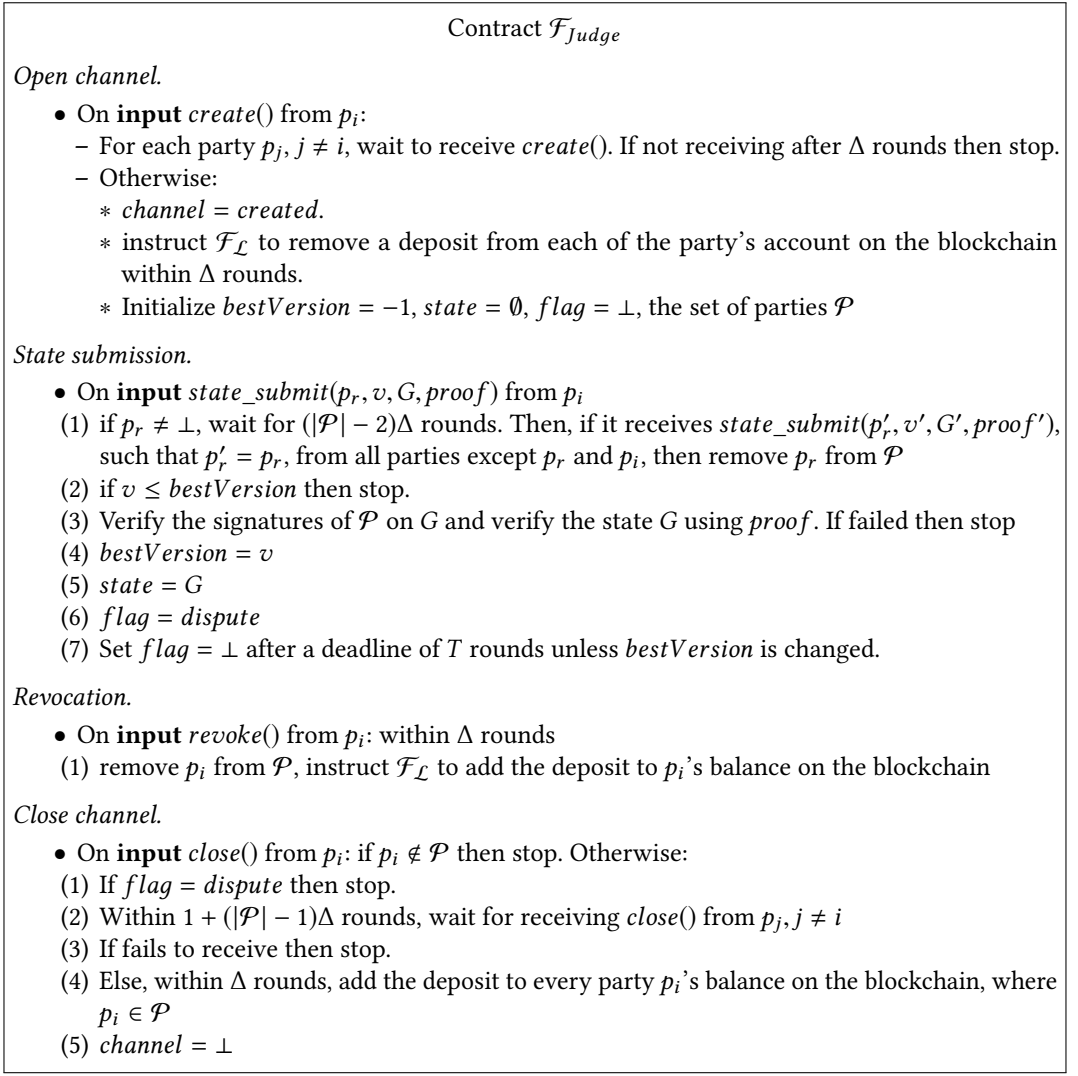

	\noindent\fbox{%
		\parbox{\columnwidth}{%
			\begin{center}
				Contract  $\mathcal{F}_{Judge}$
			\end{center}
			
			\paragraph{Open channel}
			\begin{itemize}
				\item On \textbf{input} $create()$ from $p_i$:
				\begin{itemize}
					\item For each party $p_j$, $j \neq i$, wait to receive $create()$. If not receiving after $\Delta$ rounds then stop.
					\item Otherwise: 
					\begin{itemize}
						\item $channel = created$.
						\item instruct $\mathcal{F}_\mathcal{L}$ to remove a deposit from each of the party's account on the blockchain within $\Delta$ rounds.
						\item Initialize $ bestVersion = -1 $, $state = \emptyset$, $flag = \bot$, the set of parties $\mathcal{P}$
					\end{itemize}
				\end{itemize}
			\end{itemize}

			\paragraph{State submission}
			\begin{itemize}
				\item On \textbf{input} $state\_submit(p_r, v, G, proof)$ from $p_i$
				\begin{enumerate}
					\item if $p_r \neq \bot$, wait for $(|\mathcal{P}|-2)\Delta$ rounds. Then, if it receives $state\_submit(p_r', v', G', proof')$, such that $p_r'=p_r$, from all parties except $p_r$ and $p_i$, then remove $p_r$ from $\mathcal{P}$
					\item if $v \leq bestVersion$ then stop.
					\item Verify the signatures of $\mathcal{P}$ on $G$ and verify the state $G$ using $proof$. If failed then stop
					\item $bestVersion = v$
					\item $state = G$ 
					\item $flag = dispute$
					\item Set $flag = \bot$ after a deadline of $T$ rounds unless $bestVersion$ is changed.
				\end{enumerate}
			\end{itemize}
			
			\paragraph{Revocation}
			\begin{itemize}
				\item On \textbf{input} $revoke()$ from $p_i$: within $\Delta$ rounds 
				\begin{enumerate}
					\item remove $p_i$ from $\mathcal{P}$, instruct $\mathcal{F}_\mathcal{L}$ to add the deposit to $p_i$'s balance on the blockchain
				\end{enumerate}
			\end{itemize}
			
			\paragraph{Close channel}
			\begin{itemize}
				\item On \textbf{input} $close()$ from $p_i$: if $p_i \notin \mathcal{P}$ then stop. Otherwise:
				\begin{enumerate}
					\item If $flag = dispute$ then stop.
					\item Within $1 + (|\mathcal{P}| - 1)\Delta$ rounds, wait for receiving $close()$ from $p_j, j\neq i$
					\item If fails to receive then stop.
					\item Else, within $\Delta$ rounds, add the deposit to every party $p_i$'s balance on the blockchain, where $p_i \in \mathcal{P}$
					\item $channel = \bot$
				\end{enumerate}
			\end{itemize}
		}%
	}
	\caption{Judge contract}
	\label{fig:judge}
\end{figure}

As the contract always maintains the valid state on which all parties have agreed (by verifying all the signatures), we can publicly verify if all parties are following the protocol. A dishonest party who attempts to submit an outdated state will be detected as the smart contract is public, that state would be then overwritten by a more recent state. When the state channel is closed, the contract is now holding the latest state with the final bids of all the parties, and by the immutability of blockchain, no bidder can deny having made the relevant bid. Therefore, this contract satisfies the \textit{Non-Repudiation} and \textit{Public Verifiability} goals.

\paragraph{Off-chain protocol .}
In this section, we present the off-chain protocol  that operates among parties in a double auction process. In the same manner as $\mathcal{F}_{auction}$, the protocol  consists of four parts: (1) Create state channel, (2) Determine best response, (3) Revocation, and (4) Close state channel. \Cref{fig:protocol-overview} illustrates the connections among these parts as well as the execution order of the protocol.

First, to create a new state channel, the environment sends a message $create()$ to one of the parties. Let's denote this initiating party as $p_i$. The detailed protocol is shown in \Cref{fig:protocol1}. $p_i$ will send a $create()$ message to the smart contract $\mathcal{F}_{Judge}$ which will take $\Delta$ rounds to get confirmed on the blockchain. As this message is visible to the whole network, any $p_{j\neq i}$ can detect this event and also send a $create()$ message to $\mathcal{F}_{Judge}$. To detect this event, each $p_j$ needs to retrieve the current state of blockchain which takes 1 round and as there are $n-1$ parties $p_j$, $p_i$ has to wait $1 + (n-1)\Delta$ rounds. If all parties agree on creating the state channel, this process will be successful and the channel will be opened. After that, the smart contract will take a deposit from the account of each party.

\begin{figure}
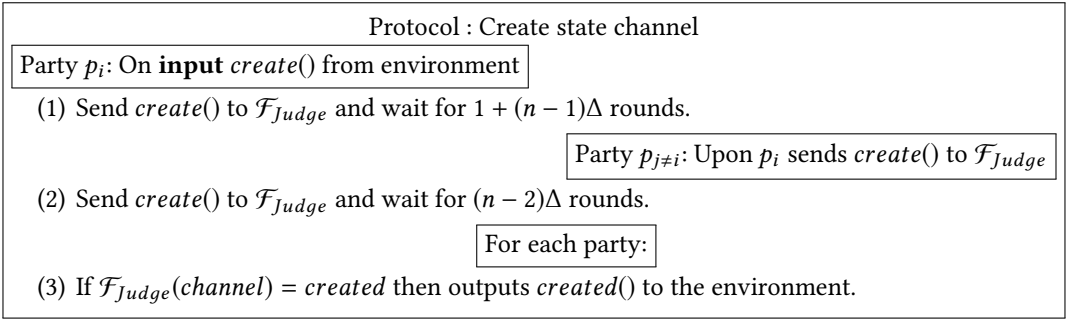

	\noindent\fbox{%
		\parbox{\columnwidth}{%
			\begin{center}
				Protocol  : Create state channel
			\end{center}
			
			\begin{flushleft}
				\fbox{Party $p_i$: On \textbf{input} $create()$ from environment}
			\end{flushleft}
			
			\begin{enumerate}
				\item Send $create()$ to $\mathcal{F}_{Judge}$ and wait for $1 + (n-1)\Delta$ rounds.
				\begin{flushright}
					\fbox{Party $p_{j\neq i}$: Upon $p_i$ sends $create()$ to $\mathcal{F}_{Judge}$}
				\end{flushright}
				\item Send $create()$ to $\mathcal{F}_{Judge}$ and wait for $(n-2)\Delta$ rounds.
				\begin{center}
					\fbox{For each party:}
				\end{center}
				\item If $\mathcal{F}_{Judge}(channel) = created$ then outputs $created()$ to the environment.
			\end{enumerate}
		}%
	}
	\caption{Protocol  : Create state channel}
	\label{fig:protocol1}
\end{figure}

\begin{figure}
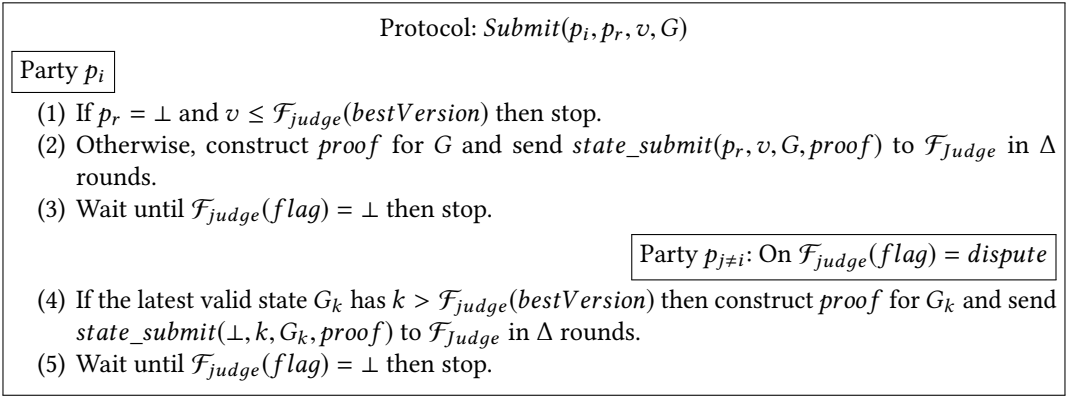

	\noindent\fbox{%
		\parbox{\columnwidth}{%
			\begin{center}
				Protocol: $Submit(p_i, p_r, v, G)$
			\end{center}
			\begin{flushleft}
				\fbox{Party $p_i$}
			\end{flushleft}
			\begin{enumerate}
				\item If $p_r = \bot$ and $v \leq \mathcal{F}_{judge}(bestVersion)$ then stop.
				\item Otherwise, construct $proof$ for $G$ and send $state\_submit(p_r, v, G, proof)$ to $\mathcal{F}_{Judge}$ in $\Delta$ rounds. 
				\item Wait until $\mathcal{F}_{judge}(flag) = \bot$ then stop.
				\begin{flushright}
					\fbox{Party $p_{j\neq i}$: On $\mathcal{F}_{judge}(flag) = dispute$}
				\end{flushright}
				\item If the latest valid state $G_k$ has $k > \mathcal{F}_{judge}(bestVersion)$ then construct $proof$ for $G_k$ and send $state\_submit(\bot, k, G_k, proof)$ to $\mathcal{F}_{Judge}$ in $\Delta$ rounds. 
				\item Wait until $\mathcal{F}_{judge}(flag) = \bot$ then stop.
			\end{enumerate}
		}%
	}
	\caption{Protocol: $Submit$}
	\label{fig:submit}
\end{figure}

When parties run into dispute, they will have to resolve on-chain. In specific, the procedure $Submit()$ as shown in \Cref{fig:submit} allows any party to submit the current state to the smart contract. However, as stated above, $\mathcal{F}_{Judge}$ only considers the valid state that has the highest version number. In this procedure, we also define a $proof$ of a state $G$. Based on the algorithm used for double auction, this $proof$ is anything that can verify whether the calculation of $G$ in an iteration is correct or not. For example, $proof$ can be all the valid bids in that iteration. When any party submits a state, the $\mathcal{F}_{Judge}$ will raise the state variable $flag=dispute$. Upon detecting this event, other parties can submit their states if they have higher version numbers. After a deadline of $T$ rounds, if none of the parties can submit a newer state, $\mathcal{F}_{Judge}$ will set $flag=\bot$ to conclude the dispute period. Furthermore, we also note that this procedure also supports eliminating any dishonest party that does not follow the protocol by setting the parameter $p_r$ to that party. This function requires a unanimous agreement among the remaining honest parties. If a party $p_i$ wants to eliminate a party $p_r$, it will need other parties, except $p_r$, to call the $Submit()$ protocol to remove $p_r$ from $\mathcal{P}$.

\begin{figure}
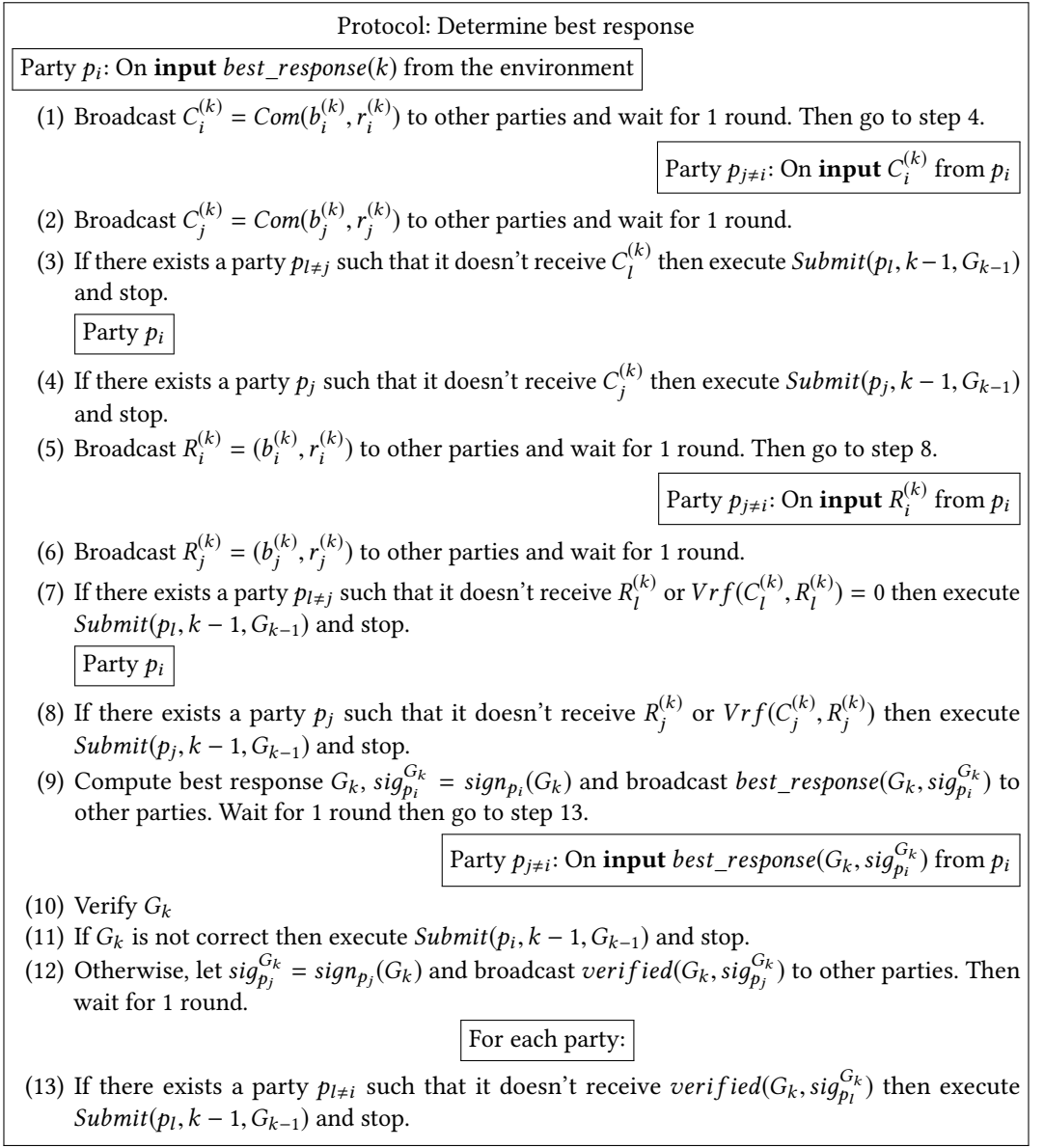

	\noindent\fbox{%
		\parbox{\columnwidth}{%
			\begin{center}
				Protocol: Determine best response
			\end{center}
			\begin{flushleft}
				\fbox{Party $p_i$: On \textbf{input} $best\_response(k)$ from the environment}
			\end{flushleft}
			\begin{enumerate}
				\item Broadcast $C_i^{(k)} = Com(b_i^{(k)}, r_i^{(k)})$ to other parties and wait for 1 round. Then go to step 4.
				\begin{flushright}
					\fbox{Party $p_{j\neq i}$: On \textbf{input} $C_i^{(k)}$ from $p_i$}
				\end{flushright}
				\item Broadcast $C_j^{(k)} = Com(b_j^{(k)}, r_j^{(k)})$ to other parties and wait for 1 round.
				\item If there exists a party $p_{l\neq j}$ such that it doesn't receive $C_l^{(k)}$ then execute $Submit(p_l, k - 1, G_{k-1})$ and stop.
				\begin{flushleft}
					\fbox{Party $p_i$}
				\end{flushleft}
				\item If there exists a party $p_j$ such that it doesn't receive $C_j^{(k)}$ then execute $Submit(p_j, k - 1, G_{k-1})$ and stop.
				\item Broadcast $R_i^{(k)} = (b_i^{(k)}, r_i^{(k)})$ to other parties and wait for 1 round. Then go to step 8.
				\begin{flushright}
					\fbox{Party $p_{j\neq i}$: On \textbf{input} $R_i^{(k)}$ from $p_i$}
				\end{flushright}
				\item Broadcast $R_j^{(k)} = (b_j^{(k)}, r_j^{(k)})$ to other parties and wait for 1 round.
				\item If there exists a party $p_{l\neq j}$ such that it doesn't receive $R_l^{(k)}$ or $Vrf(C_l^{(k)}, R_l^{(k)}) = 0$ then execute $Submit(p_l, k - 1, G_{k-1})$ and stop.
				\begin{flushleft}
					\fbox{Party $p_i$}
				\end{flushleft}
				\item If there exists a party $p_j$ such that it doesn't receive $R_j^{(k)}$ or $Vrf(C_j^{(k)}, R_j^{(k)})$ then execute $Submit(p_j, k - 1, G_{k-1})$ and stop.
				\item Compute best response $G_k$, $sig_{p_i}^{G_k} = sign_{p_i}(G_k)$ and broadcast $best\_response(G_k, sig_{p_i}^{G_k})$ to other parties. Wait for 1 round then go to step 13.

				\begin{flushright}
					\fbox{Party $p_{j\neq i}$: On \textbf{input} $best\_response(G_k, sig_{p_i}^{G_k})$ from $p_i$}
				\end{flushright}
				\item Verify $G_k$
				\item If $G_k$ is not correct then execute $Submit(p_i, k - 1, G_{k-1})$ and stop.
				\item Otherwise, let $sig_{p_j}^{G_k} = sign_{p_j}(G_k)$ and broadcast $verified(G_k, sig_{p_j}^{G_k})$ to other parties. Then wait for 1 round.
				\begin{center}
					\fbox{For each party:}
				\end{center}
				\item If there exists a party $p_{l\neq i}$ such that it doesn't receive $verified(G_k, sig_{p_l}^{G_k})$ then execute $Submit(p_l, k - 1, G_{k-1})$ and stop.
			\end{enumerate}
		}%
	}
	\caption{Protocol  : Determine best response}
	\label{fig:protocol2}
\end{figure}

Next, in \Cref{fig:protocol2}, we present the protocol for determining the best response which only consists of off-chain messages if all the parties are honest. In each iteration $k$, this process starts when the environment sends $best\_response(k)$ to a party $p_i$. Again, this does not violate the trustless property since $p_i$ can be any party chosen at random. First, $p_i$ broadcasts the commitment $C_i^{(k)}$ of its bid which only takes one round since this is an off-chain message. Other parties upon receiving this message will also broadcast their commitments. Then, $p_i$ proceeds to broadcast the opening $R_i^{(k)}$ of its bid and hence, other parties upon receiving this $R_i^{(k)}$ also broadcast their openings. If any party refuses to send their bids or sends an invalid bid, other parties will call the $ Submit $ procedure to eliminate that dishonest party from the auction process. Thus, that party will lose all the deposit. In practice, one may consider refunding a portion of deposit back to that party. To achieve this, we only need to modify the first line of the functionality "State submission" in $\mathcal{F}_{Judge}$ to return a portion of deposit to $p_r$.

During the auction process, some parties may want to abort the auction process. In order to avoid losing the deposit, they must use the Revocation protocol described in \Cref{fig:protocol3} to send a $revoke()$ message to $\mathcal{F}_{Judge}$. In this case, they will get the deposit back in full and be removed from the set $\mathcal{P}$. Other parties upon detecting this operation also update their local $\mathcal{P}$ to ensure the consistency.

\begin{figure}
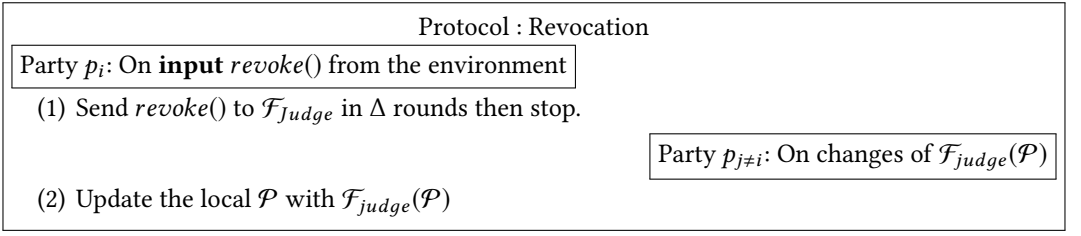

	\noindent\fbox{%
		\parbox{\columnwidth}{%
			\begin{center}
				Protocol  : Revocation
			\end{center}
			\begin{flushleft}
				\fbox{Party $p_i$: On \textbf{input} $revoke()$ from the environment}
			\end{flushleft}
			\begin{enumerate}
				\item Send $revoke()$ to $\mathcal{F}_{Judge}$ in $\Delta$ rounds then stop.
				\begin{flushright}
					\fbox{Party $p_{j\neq i}$: On changes of $\mathcal{F}_{judge}(\mathcal{P})$}
				\end{flushright}
				\item Update the local $\mathcal{P}$ with $\mathcal{F}_{judge}(\mathcal{P})$
			\end{enumerate}
		}%
	}
	\caption{Protocol  : Revocation}
	\label{fig:protocol3}
\end{figure}

Finally, \Cref{fig:close} illustrates the protocol for closing the state channel. One technical point in this protocol is that we must check whether there is any ongoing dispute. If so then we must not close the channel. In the same way of opening the channel, a party $p_i$ also initiates the request by sending a message $close()$ to the smart contract. Upon detecting this event, other parties may also send $close()$. If all parties agreed on closing the channel, they will get the deposit back.

\begin{figure}
	\noindent\fbox{%
		\parbox{\columnwidth}{%
			\begin{center}
				Protocol  : Close state channel
			\end{center}
			
			\begin{flushleft}
				\fbox{Party $p_i$: On \textbf{input} $close()$ from environment}
			\end{flushleft}
			
			\begin{enumerate}
				\item If $\mathcal{F}_{Judge}(flag) = dispute$ then stop.
				\item Send $close()$ to $\mathcal{F}_{Judge}$ and wait for $1 + (|\mathcal{P}|-1)\Delta$ rounds. Then go to step 4
				\begin{flushright}
					\fbox{Party $p_{j\neq i}$: Upon $p_i$ sends $close()$ to $\mathcal{F}_{Judge}$}
				\end{flushright}
				\item Send $close()$ to $\mathcal{F}_{Judge}$ and wait for $(|\mathcal{F}_{auction}(\mathcal{P})|-2)\Delta$ rounds.
				\begin{center}
					\fbox{For each party:}
				\end{center}
				\item Wait for $\Delta$ rounds and check if $\mathcal{F}_{Judge}(channel) = \bot$ then outputs $closed()$ to the environment.
			\end{enumerate}
		}%
	}
	\caption{Protocol  : Close state channel}
	\label{fig:close}
\end{figure}

\begin{figure}
    \centering
    \includegraphics[width=0.85\linewidth]{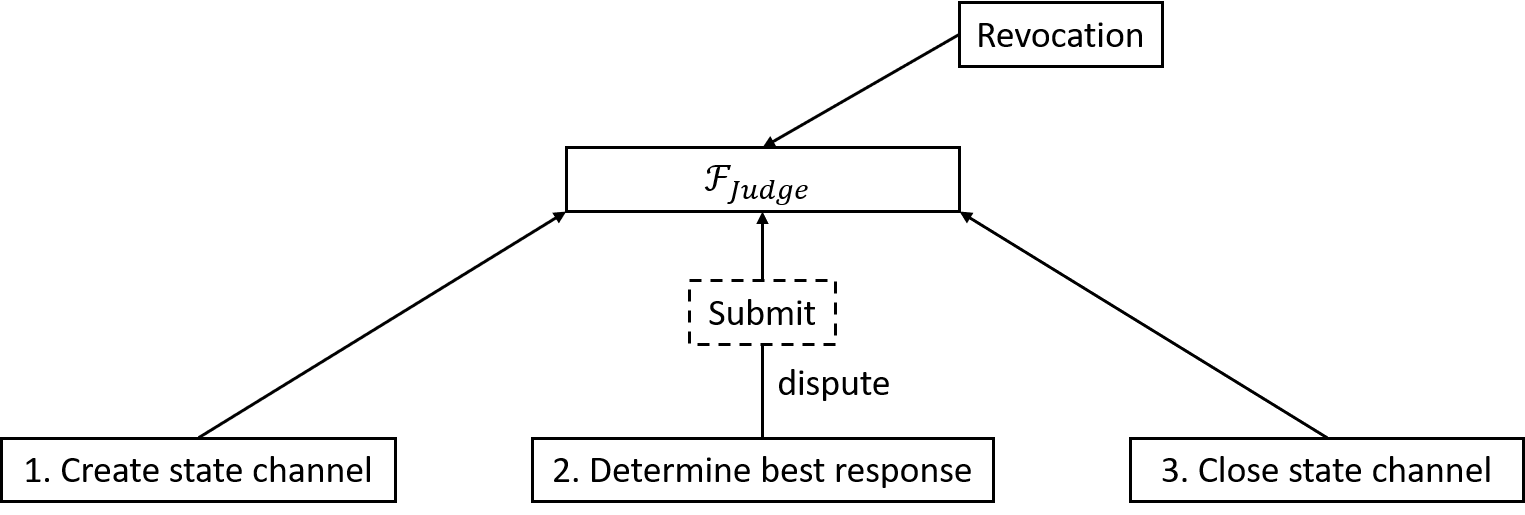}
    \caption{Connections and execution order of the functionalities of the proposed protocol. The "Revocation" functionality can be triggered any time between "Create state channel" and "Close state channel". Any dispute occurred during the "Determine best response" requires a call to the "Submit" functionality.}
    \label{fig:protocol-overview}
\end{figure}

\subsection{Security and Privacy Analysis}
In this section, we prove the security of our solution in the UC model. We denote EXEC$_{\pi, \mathcal{A}, \mathcal{E}}$ as the outputs of the environment $\mathcal{E}$ when interacting with the parties running the protocol $\pi$ and the adversary $\mathcal{A}$. From \cite{malavolta2017concurrency}, we have the following definition:

\begin{definition}[UC-security]
	A protocol $\pi$ UC-realizes an ideal functionality $\mathcal{F}$ if for any adversarial $\mathcal{A}$, there exists a simulator $\mathcal{S}$ such that for any environment $\mathcal{E}$ the outputs EXEC$_{\pi, \mathcal{A}, \mathcal{E}}$ and EXEC$_{\mathcal{F}, \mathcal{S}, \mathcal{E}}$ are computationally indistinguishable
\end{definition}

The main goal of this analysis is to prove that the off-chain protocol UC-realizes the ideal functionality $\mathcal{F}_{auction}$ by constructing a simulator $\mathcal{S}$ in the ideal world that translates every attacker in the real world, such that the two worlds are indistinguishable. To achieve that, we need to ensure the consistency of timings, i. e., the environment $\mathcal{E}$ must receive the same message in the same round in both worlds. Furthermore, in any round, the internal state of each party must be identical between the two worlds, which will make $\mathcal{E}$ unable to perceive whether it is interacting with the real world or the ideal one. 
	
Per Canetti \cite{canetti2001universally}, the strategy for proving the UC-security is constructing the simulator $\mathcal{S}$ that handles the corrupted parties and simulates the $(\mathcal{F}_{judge},\mathcal{F}_\mathcal{L})$-hybrid world while interacting with $\mathcal{F}_{auction}$. The simulator will maintain a copy of the hybrid world internally so that it can turn every behavior in the hybrid world into an indistinguishable one in the real world. We further assume that upon receiving a message from a party, the ideal functionality $\mathcal{F}_{auction}$ will leak that message to the simulator. For simplicity, we do not elaborate these operations when constructing the simulator. Since $\mathcal{S}$ locally runs a copy of the hybrid world, $\mathcal{S}$ knows the behavior of the corrupted parties and the messages sent from $\mathcal{A}$ to $\mathcal{F}_{Judge}$, therefore, $\mathcal{S}$ can instruct the $\mathcal{F}_{auction}$ to update the ledger $\mathcal{L}$ in the same manner as the hybrid word.

In specification of the off-chain protocol, the protocol Submit in \Cref{fig:submit} is called as a sub-routine of the protocol Determine best response in \Cref{fig:protocol2}. Hence, we first define a simulator for the protocol Submit by proving the following lemma:

\begin{lemma}\label{lem:submit}
    Under the assumptions given in \cref{sec:model}, we can construct a simulator for the protocol Submit in the ideal world such that the view of $\mathcal{E}$ remains the same in both the ideal world and the $(\mathcal{F}_{judge},\mathcal{F}_\mathcal{L})$-hybrid world.
\end{lemma}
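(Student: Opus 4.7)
The plan is to construct a simulator $\mathcal{S}$ that internally maintains a copy of the $(\mathcal{F}_{Judge},\mathcal{F}_\mathcal{L})$-hybrid execution, treating the public state of $\mathcal{F}_{Judge}$ (the variables $bestVersion$, $state$, $flag$, and $\mathcal{P}$) as the primary object that must be kept identical in the two worlds. The key observation is that each call to Submit produces a sequence of $state\_submit$ messages whose effect on those variables is fully determined by the signature verifications, the proof verifications, and the round counters, so faithful reproduction of these messages inside the simulation suffices to match the view of $\mathcal{E}$.

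I will proceed by case analysis on who initiates the Submit call. If the initiator $p_i$ is honest, then $\mathcal{S}$ sees, via $\mathcal{F}_{auction}$'s leakage, the operation that would have caused $p_i$ to call Submit (for example, an elimination triggered inside Determine best response); it forges the appropriate $state\_submit(p_r,v,G,proof)$ message on behalf of $p_i$, using the simulated signing keys it controls for honest parties, and posts it to its internal $\mathcal{F}_{Judge}$. If the initiator is corrupted, $\mathcal{A}$ directly supplies the $state\_submit$ message; $\mathcal{S}$ applies the judge's rules locally and mirrors any resulting change in $\mathcal{P}$ by sending the corresponding instruction to $\mathcal{F}_{auction}$ so the ideal-world party set stays synchronized. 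The elimination branch with $p_r\neq\bot$ is handled by counting concurring calls within the $(|\mathcal{P}|-2)\Delta$-round window exactly as $\mathcal{F}_{Judge}$ does.

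For indistinguishability, the argument reduces to four claims: (i) signatures attached to honest states cannot be altered by $\mathcal{A}$, by ECDSA unforgeability; (ii) commitments used inside any $proof$ cannot be opened to two distinct bids, by the binding property of $(Com,Vrf)$; (iii) all timing is controlled by the same constants $\Delta$ and $T$ in both worlds, so the round at which $flag$ returns to $\bot$ coincides; (iv) $bestVersion$ is monotone and is updated only when a strictly higher version survives verification, so the final on-chain $state$ is the same function of the multiset of submitted tuples in both worlds. Combining (i)--(iv), the public state of the judge observed by $\mathcal{E}$ is a deterministic function of inputs that agree in both worlds up to negligible probability.

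The main obstacle will be the cross-consistency between the simulated $\mathcal{F}_{Judge}$ and the ideal $\mathcal{F}_{auction}$ in the elimination branch: when a corrupted party aims to expel an honest $p_r$, $\mathcal{S}$ must instruct $\mathcal{F}_{auction}$ to update $\mathcal{P}$ exactly when the judge's concurrence condition fires, and it must do so without leaking any information about honest bids while constructing or checking $proof$. Handling this cleanly requires $\mathcal{S}$ to delay its call to $\mathcal{F}_{auction}$ until precisely the round at which the real judge would remove $p_r$, and to ensure that every $proof$ is assembled from data that is either already public (signed commitments, prior states) or chosen by the corrupted parties themselves; otherwise one risks either a timing gap or an information-theoretic leak that $\mathcal{E}$ could exploit to distinguish the two worlds.
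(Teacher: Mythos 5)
Your construction is essentially the paper's own proof: the same case split on whether the party invoking Submit is corrupted or some other party is, the same internal copy of $\mathcal{F}_{Judge}$ whose removal of $p_r$ from $\mathcal{P}$ (after the $(|\mathcal{P}|-2)\Delta$-round concurrence window) is mirrored into $\mathcal{F}_{auction}$ in the same round, and the same argument that matching this state change and the $\Delta$/$T$ timing makes the views of $\mathcal{E}$ coincide. The paper's version is simply terser, leaving the honest-initiator bookkeeping, the leakage from $\mathcal{F}_{auction}$, and the appeals to signature unforgeability and commitment binding implicit in the model assumptions stated before the lemma, where you spell them out explicitly.
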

\begin{proof}
     Let $p_i$ be the party that calls the $Submit()$ protocol, we define $S\_Submit()$ as the simulator of the protocol $Submit()$. If $p_i$ is corrupted, upon $p_i$ sends $state\_submit(p_r, v, G, proof)$ to $\mathcal{F}_{Judge}$
	\begin{enumerate}
	    \item $S\_Submit()$ waits for $\Delta$ rounds. If $p_r = \bot$ then stop.
	    \item Otherwise, $S\_Submit()$ waits for $(|\mathcal{P}| - 2)\Delta$ rounds
		\item If all parties $p_{j\neq \{i,r\}}$ send $state\_submit(p_r, v, G, proof)$ to $\mathcal{F}_{Judge}$ then instruct $\mathcal{F}_{auction}$ to remove $p_r$ from $\mathcal{P}$.
	\end{enumerate}
	If $p_{j\neq i}$ is corrupted, $S\_Submit()$ also updates its $\mathcal{P}$ in the same round as the real world if $\mathcal{F}_{Judge}$  updates $\mathcal{P}$.
	
	Since the protocol $Submit()$ can potentially change the internal state of the $\mathcal{F}_{judge}$ and the parties by removing a party from $\mathcal{P}$, the $S\_Submit()$ ensures that $\mathcal{F}_{auction}$ also performs the same operation in the same round. Hence, the view of both worlds are consistent.
\end{proof}

Finally, we prove the following theorem:

\begin{theorem}
	Under the assumptions given in \cref{sec:model}, the proposed off-chain protocol UC-realizes the ideal functionality $\mathcal{F}_{auction}$ in the $(\mathcal{F}_{judge},\mathcal{F}_\mathcal{L})$-hybrid model.
\end{theorem}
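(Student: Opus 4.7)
The plan is to follow the standard UC paradigm: construct a simulator $\mathcal{S}$ that, while interacting with $\mathcal{F}_{auction}$ and the environment $\mathcal{E}$ in the ideal world, maintains an internal copy of the $(\mathcal{F}_{Judge}, \mathcal{F}_\mathcal{L})$-hybrid world so that $\mathcal{E}$'s view is computationally indistinguishable from a real execution of the off-chain protocol. Because the protocol is naturally decomposed into four sub-protocols (Create, Determine best response, Revocation, Close), plus the Submit sub-routine that already has a simulator via Lemma \ref{lem:submit}, I would build $\mathcal{S}$ modularly: one sub-simulator per sub-protocol, with $S\_Submit$ invoked wherever the real protocol calls Submit. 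Throughout, I would exploit the assumption stated in the paragraph preceding Lemma \ref{lem:submit} that $\mathcal{F}_{auction}$ leaks every input it receives to $\mathcal{S}$, so $\mathcal{S}$ knows in each round which honest party fired which command and with what arguments.

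Next I would write out each sub-simulator. For Create, upon learning that a corrupted $p_i$ sent $create()$ to $\mathcal{F}_{Judge}$ in the simulated hybrid world, $\mathcal{S}$ forwards the corresponding $create()$ to $\mathcal{F}_{auction}$; symmetrically, when $\mathcal{F}_{auction}$ signals that an honest party has input $create()$, $\mathcal{S}$ injects the matching hybrid-world message on that party's behalf, respecting the $1+(n-1)\Delta$ round budget. Close is handled analogously. Revocation is a one-shot forward plus an update of $\mathcal{P}$ mirroring Protocol \Cref{fig:protocol3}. The non-trivial case is Determine best response: here $\mathcal{S}$ must simulate the commitment round, the reveal round, and the best-response broadcast with its signatures. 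For corrupted parties, $\mathcal{S}$ reads their commitments $C_i^{(k)}$ and openings $R_i^{(k)}$ off the simulated network and, whenever both messages pass $Vrf$, forwards the extracted $b_i^{(k)}$ to $\mathcal{F}_{auction}$ in the corresponding commit/reveal step; whenever a corrupted party withholds a message, sends an invalid opening, or signs an incorrect $G_k$, $\mathcal{S}$ invokes $S\_Submit$ exactly as the real parties would and simultaneously instructs $\mathcal{F}_{auction}$ to drop that party from $\mathcal{P}$. For honest parties, $\mathcal{S}$ uses the bid it received by leakage from $\mathcal{F}_{auction}$ (or, if that leakage is delayed, a commitment to a default value) to fabricate $C^{(k)}$ and $R^{(k)}$ with fresh randomness and to produce the corresponding signatures on the computed $G_k$.

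The main obstacle will be showing that $\mathcal{E}$ cannot distinguish the simulated commitment/reveal transcripts from real ones, particularly in the window where honest-party commitments have been broadcast but openings have not yet arrived. I would handle this by a hybrid argument reducing distinguishability directly to the hiding property of $Com$ (stated in \cref{sec:model}) when honest bids must be simulated without prior knowledge, and to the binding property when arguing that the $G_k$ computed by $\mathcal{F}_{auction}$ equals the one computed in the hybrid world; a second reduction to EUF-CMA of ECDSA covers the forgery of $sig_{p_j}^{G_k}$ on behalf of honest parties. Timing equivalence follows from the structural correspondence between the round budgets in $\mathcal{F}_{auction}$ and in the protocol (every "wait $\Delta$ rounds" on $\mathcal{F}_{Judge}$ in the protocol is matched by an equal wait in the functionality), together with the assumption that ledger updates and state reads take exactly $\Delta$ and $1$ rounds respectively. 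Concatenating the per-sub-protocol indistinguishability bounds, and noting that the set $\mathcal{P}$ and the balances on $\mathcal{F}_\mathcal{L}$ are updated identically in both worlds by construction of $\mathcal{S}$, yields EXEC$_{\pi,\mathcal{A},\mathcal{E}} \approx_c$ EXEC$_{\mathcal{F}_{auction},\mathcal{S},\mathcal{E}}$ and completes the proof.
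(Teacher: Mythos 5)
Your proposal follows essentially the same route as the paper: a simulator built modularly per sub-protocol (Create, Determine best response, Revocation, Close), with the Submit sub-routine handled by the simulator $S\_Submit$ from Lemma \ref{lem:submit}, case analysis on whether the initiating or a non-initiating party is corrupted, and round-by-round timing matching between $\mathcal{F}_{Judge}$/$\mathcal{F}_\mathcal{L}$ and $\mathcal{F}_{auction}$. The only difference is that the paper's own proof stops at observing that, thanks to the assumption that $\mathcal{F}_{auction}$ leaks every received message to $\mathcal{S}$, the forwarded transcripts and the set $\mathcal{P}$ are literally identical in both worlds in every round, so it never carries out the hiding/binding and EUF-CMA hybrid reductions you sketch; those add rigor but are not part of (nor needed for) the paper's argument under its stated assumptions.
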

\begin{proof}
	 We provide the description of $\mathcal{S}$ for each of the functionalities as follows.
	
	\paragraph{Open channel.} Let $p_i$ be the party that initiates the request. We inspect the following cases:
	\begin{itemize}
		\item \textit{$p_i$ is corrupted:} 
		Upon $p_i$ sends $create()$ to $\mathcal{F}_{Judge}$
		\begin{enumerate}
			\item $\mathcal{S}$ waits for $\Delta$ rounds
			\item Then sends $create()$ to $\mathcal{F}_{auction}$ to make sure that $\mathcal{F}_{auction}$ receives $create()$ in the same round as $\mathcal{F}_{Judge}$. Then wait for $1+(n-1)\Delta$ rounds
			\item if $\mathcal{F}_{auction}(channel) = created$ then sends $created()$ to $\mathcal{E}$ on behalf of $p_i$.
		\end{enumerate}  
		\item \textit{$p_{j\neq i}$ is corrupted:} Upon $p_i$ sends $create()$ to $\mathcal{F}_{auction}$
		\begin{enumerate}
			\item $\mathcal{S}$ waits for $\Delta$ rounds
			\item If $p_j$ sends $create()$ to $\mathcal{F}_{Judge}$ then $\mathcal{S}$ sends $create()$ to $\mathcal{F}_{auction}$ and wait for $(n-2)\Delta$ rounds
			\item if $\mathcal{F}_{auction}(channel) = created$ then sends $created()$ to $\mathcal{E}$ on behalf of $p_j$
		\end{enumerate}
	\end{itemize}
	
	In all cases above, according to \Cref{fig:protocol1}, $p_i$ or $p_j$ will output $created()$ to $\mathcal{E}$ if $\mathcal{F}_{auction}(channel) = created$. Hence, $\mathcal{S}$ also outputs $created()$ in the same round. Therefore, the environment $\mathcal{E}$ receives the same outputs in the same round in both worlds.
	
	\paragraph{Close channel.}
	Let $p_i$ be the party that initiates the request. We inspect the following cases:
	\begin{itemize}
		\item \textit{$p_i$ is corrupted}: Upon $p_i$ sends $close()$ to $\mathcal{F}_{Judge}$
		\begin{enumerate}
			\item if $\mathcal{F}_{judge}(flag) = dispute$ then stop. Otherwise, $\mathcal{S}$ waits for $\Delta$ rounds
			\item Then sends $close()$ to $\mathcal{F}_{auction}$ to make sure that $\mathcal{F}_{auction}$ receives $close()$ in the same round as $\mathcal{F}_{Judge}$. Then wait for $1+(\mathcal{F}_{auction}(|\mathcal{P}|)-1)\Delta$ rounds
			\item Wait for another $\Delta$ round and check if $\mathcal{F}_{auction}(channel) = \bot$ then sends $close()$ to $\mathcal{E}$ on behalf of $p_i$.
		\end{enumerate} 
		\item \textit{$p_{j\neq i}$ is corrupted:} Upon $p_i$ sends $close()$ to $\mathcal{F}_{auction}$
		\begin{enumerate}
			\item $\mathcal{S}$ waits for $\Delta$ rounds
			\item If $p_j$ sends $close()$ to $\mathcal{F}_{Judge}$ then $\mathcal{S}$ sends $close()$ to $\mathcal{F}_{auction}$ and wait for $(|\mathcal{F}_{auction}(\mathcal{P})|-2)\Delta$ rounds
			\item Wait for another $\Delta$ round and check if $\mathcal{F}_{auction}(channel) = \bot$ then sends $closed()$ to $\mathcal{E}$ on behalf of $p_j$.
		\end{enumerate}
	\end{itemize}
	
	The indistinguishability in the view of $\mathcal{E}$ between the two worlds holds in the same manner as \textit{Open channel}.
	
	\paragraph{Revocation.}
	Let $p_i$ be the party that initiates the request. We inspect the following cases:
	\begin{itemize}
		\item \textit{$p_i$ is corrupted}: Upon $p_i$ sends $revoke()$ to $\mathcal{F}_{Judge}$
		\begin{enumerate}
			\item $\mathcal{S}$ waits for $\Delta$ rounds
			\item Then sends $revoke()$ to $\mathcal{F}_{auction}$ to make sure that $\mathcal{F}_{auction}$ receives $revoke()$ in the same round as $\mathcal{F}_{Judge}$.
		\end{enumerate} 
		\item \textit{$p_{j\neq i}$ is corrupted:} Upon $p_i$ sends $revoke()$ to $\mathcal{F}_{auction}$
		\begin{enumerate}
			\item If $p_j$ updates the local $\mathcal{P}$ then $\mathcal{S}$ also updates its $\mathcal{P}$.
		\end{enumerate}
	\end{itemize}
	
	In both cases, $\mathcal{S}$ ensures that the messages exchanged between the entities are identical in both worlds. Moreover, since $\mathcal{P}$ is updated according to the real world, thus the internal state of the each party are also identical. Therefore, the view of $\mathcal{E}$ between the two worlds are indistinguishable.
	
	\paragraph{Determine best response.}
	Based on \cref{lem:submit}, we define $S\_Submit()$ as the simulator of the protocol $Submit()$ in the ideal world. Let $p_i$ be the party that calculates the best responses. In each iteration $k$, we inspect the following cases:
	\begin{itemize}
		\item \textit{$p_i$ is corrupted:} Upon $p_i$ broadcasts $C_i^{(k)}$ to other parties
		\begin{enumerate}
			\item Send $C_i^{(k)}$ to $\mathcal{F}_{auction}$ and wait for 1 round.
			\item If $\mathcal{F}_{auction}$ removes any party then stop. If $p_i$ executes the $Submit()$ then $\mathcal{S}$ also calls the $S\_Submit()$ in the same round.
			\item Otherwise, if $p_i$ broadcasts $R_i^{(k)}$ to other parties then $\mathcal{S}$ sends $R_i^{(k)}$ to $\mathcal{F}_{auction}$ and waits for 1 round. Else, stop.
			\item If $\mathcal{F}_{auction}$ removes any party then stop. If $p_i$ executes the $Submit()$ then $\mathcal{S}$ also calls the $S\_Submit()$ in the same round. Otherwise, wait for 1 round
			\item Receive $G_k$ from $\mathcal{F}_{auction}$ and wait for 1 round.
			\item If $p_i$ executes the $Submit()$ then $\mathcal{S}$ also calls the $S\_Submit()$ in the same round. Otherwise, stop.
		\end{enumerate} 
		\item \textit{$p_{j \neq i}$ is corrupted:} Upon $p_i$ sends $best\_response(k)$ to $\mathcal{F}_{auction}$
		\begin{enumerate}
			\item Wait until $p_i$ sends $C_i^{(k)}$ to $\mathcal{F}_{auction}$, then forwards that $C_i^{(k)}$ to $p_j$ in the same round.
			\item If $p_j$ broadcasts $C_j^{(k)}$ to other parties then $\mathcal{S}$ sends $C_j^{(k)}$ to $\mathcal{F}_{auction}$. Else, execute $S\_Submit()$ to eliminate the party that made $p_j$ refuse to broadcast and stop.
			\item Wait for 1 round. If $p_i$ sends $R_i^{(k)}$ to $\mathcal{F}_{auction}$, then forwards that $R_i^{(k)}$ to $p_j$ in the same round. Otherwise, stop.
			\item If $p_j$ broadcasts $R_j^{(k)}$ to other parties then $\mathcal{S}$ sends $R_j^{(k)}$ to $\mathcal{F}_{auction}$. Else, execute $S\_Submit()$ to eliminate the party that made $p_j$ refuse to broadcast and stop.
			\item Wait for 1 round, if $\mathcal{S}$ doesn't receive $G_k$ from $\mathcal{F}_{auction}$ then stop. Otherwise, $\mathcal{S}$ forwards that $G_k$ to $p_j$.
			\item If $p_j$ executes the $Submit()$ then $\mathcal{S}$ also calls the $S\_Submit()$ in the same round. Otherwise, stop.
		\end{enumerate}
	\end{itemize}
	
	Since the messages exchanged between any entities are exact in both worlds, the indistinguishability in the view of $\mathcal{E}$ between the two worlds holds.
	
\end{proof}

\section{Implementation and Evaluation} \label{sec:eval}
In this section, we present an evaluation of our proposed double auction framework by running some experiments on a proof-of-concept implementation. Before that, we introduce our novel development framework for distributed computing on state channels.
\subsection{Development framework for distributed computing based on state channels}
To the best of our knowledge, this is the first work that builds a functioning programming framework that can be used to deploy any distributed protocols using blockchain and state channels. We use the Elixir programming language to implement the protocol for the participating parties that run on Erlang virtual machines (EVM). The smart contract for the state channel is implemented using the Solidity programming language, which is the official language for realizing smart contracts on Ethereum. Hence, the whole system can be deployed on an Ethereum blockchain.

The Elixir implementation of the participating parties is designed around the Actor programming model \cite{hewitt1973universal}. This model realizes \textit{actors} as the universal primitive of concurrent computation that use message-passing as the communication mechanism. In response to a message, an actor can make local decisions, send messages to other actors, or dynamically generate more actors. Since actors in Elixir are "location transparent", when sending a message, whether the recipient actor is on the same machine or on another machine, the EVM can manage to deliver the message in both cases. In our framework, each of the parties is treated as a separate actor, and they communicate with each other using asynchronous message-passing mechanism.
Furthermore, in other for the parties to interact with the smart contract, we leverage a remote procedure call protocol encoded in JSON (JSON-RPC) provided by Ethereum. The actors also listen for triggered events (e.g., the channel is opening) from the smart contract and carry out appropriate action.

To develop a distributed computing system using our framework, one would only have to define the followings:
\begin{enumerate}
    \item The state of the system at each iteration.
    \item The operation to be executed at each iteration.
    \item The off-chain messages to be exchanged among the parties of the system.
    \item The rule to resolve disputes.
\end{enumerate}
To demonstrate the feasibility of this development framework, we use it to develop a proof-of-concept implementation of the proposed double auction protocol in the next section.

\subsection{Proof-of-concept implementation of double auction}
We create a proof-of-concept implementation of our proposed double auction protocol using our development framework as illustrated above. Our implementation of the Judge contract closely resembles the protocol from \Cref{fig:judge}, it is developed in the Truffle framework \cite{truffle}, and later deployed on an Ethereum testnet managed by Ganache \cite{ganache}. Our main goal was to illustrate the feasibility and practicality of our off-chain protocols when interacting with the Judge smart contracts. As regards the commitment scheme, we used the SHA-256 hash function with a random string of 64 characters.

One crucial evaluation criteria when working with  smart contracts on the Ethereum blockchain are costs incurred by transaction fees. In Ethereum, these fees are calculated using a special unit called "gas", the fee is paid by the sender of a transaction to the miner that validates that transaction. Transactions in an Ethereum blockchain can provide inputs to and execute smart contract's functions. The amount of gas used for each transaction is determined by the amount of data it sends and the computational effort that it will take to execute certain operations. In addition, depending on the exchange rate between gas and ETH (Ethereum's currency), we can determine the final cost in ETH. In practice, this exchange rate is decided by the person who issues transactions depending on how much they want to prioritize their transactions in the mining pool. As we are using the Ganache testnet, in our calculation, we use the default gas price of 1 gas = $2\times10^{-8}$ Ether.

To deploy the Judge contract, we have to pay 3387400 gas, which corresponds to 0.0677 ETH. Using an exchange rate of 1 ETH = 152.49 USD (as of Nov 2019), the deployment of the contract costs approximately 10.3 USD. We also note that our proof-of-concept implementation did not aim to optimize the deployment cost since we implement every functionality on a single smart contract. To mitigate this cost, we would only need to transfer all functionalities of the Judge contract into an external library. Hence, this library only requires one-time deployment, instances of the Judge contract can refer to this library and re-use its functionalities. Therefore, we can save much of the deployment cost since we don't have to re-deploy all the functionalities with each instance of the Judge contract. Thus, in the following section, we disregard this deployment cost when evaluating the performance of our system.

\subsection{Experiments}
For the experiments, we run the system on machines equipped with an Intel Core i7-8550U CPU 1.8 GHz and 16 GB of RAM. The mean latency for transmitting a 32-byte message from one party to another is about 101.2 ms. We assume that, in the beginning, the parties of the system collected the public keys of each other. Moreover, each party also created an account on the Ethereum testnet. Since this is a one-time setup procedure, we do not take it into consideration when evaluating our system. To receive events from the Judge contract, each party periodically checks for triggered events from the Judge contract every 100 ms. To illustrate the double auction process, we follow the setup of a numerical experiment in \cite{zou2016efficient} that consists of 10 parties in total, in which 6 are buyers and 4 are sellers. The auction process takes a total of 300 iterations to reach NE. 

First, we implemented the straw-man design as described in \Cref{sec:strawman}. \Cref{fig:sm} shows the cumulative gas consumption over time. As can be seen, the gas consumption increases linearly with the iterations. This happens because each iteration requires parties to send on-chain transactions to the smart contract. In sum, the straw-man design used 121913080 gas, which corresponds to 2.44 ETH. This means the design incurs a cost of about 372 USD to carry out the double auction process. Furthermore, if we consider the Ethereum's block time of 15 seconds per block, it could take over 75 minutes to complete the process (assuming that the waiting time in the mining pool is negligible).

\begin{figure}
	\centering
	\subfloat[Straw-man design]{
	    \includegraphics[width=0.45\linewidth]{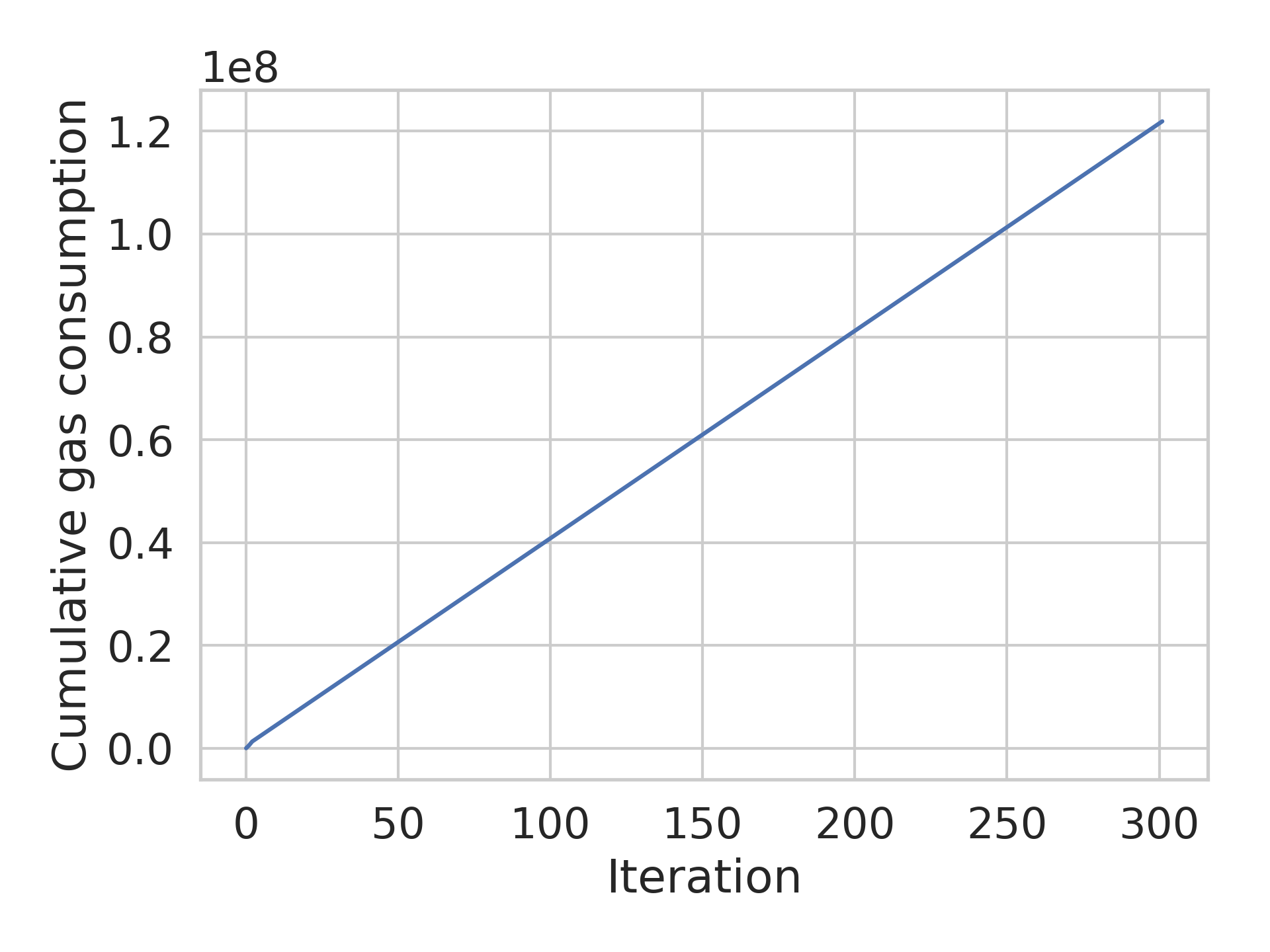}
	    \label{fig:sm}
	}	
	\hfill
	\subfloat[Using state channel]{
	    \includegraphics[width=0.45\linewidth]{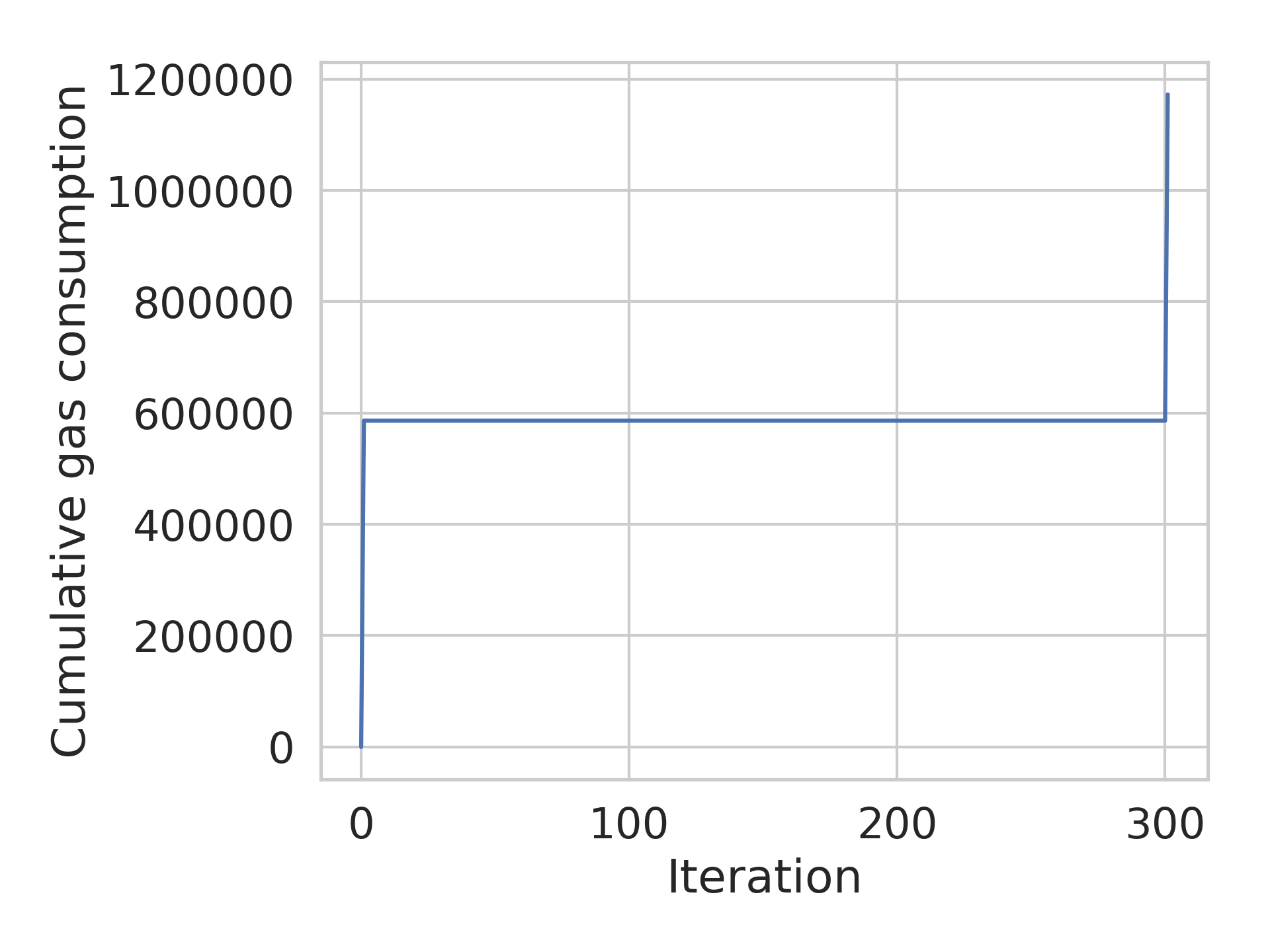}
    	\label{fig:sc}
	}
	\caption{Cumulative gas consumption over time}
\end{figure}

Next, we run the double auction on our proof-of-concept implementation. If all the parties unanimously agree to create the state channel, they need to issue 10 on-chain transactions that cost about 0.0117 ETH (each party needs to send 1 transaction to the smart contract). For closing the state channel, they need another 10 transactions that cost about 0.0109 ETH. The whole double auction process is executed off-chain and doesn't cost any gas or ETH. In \Cref{fig:sc}, which shows the cumulative gas consumption over time, we can observe that only the first and last iteration incurs some amount of gas to create and close the state channel, respectively. As a result, the proposed protocol only incurs 20 on-chain transactions and 1133420 gas. With comparison to the straw-man design, we reduced the amount of transactions and gas by 99\%. Additionally, regarding the nominal cost, our solution requires only 0.0226 ETH, which is about 3 USD, to conduct the auction process. \Cref{tab:res} summarizes the costs of execution of our proposed system.

\begin{table}[]
	\caption{The estimated costs for executing the transactions as well as the message complexity of the proposed double auction protocol when running with 10 parties.}
	\label{tab:res}
	\begin{tabular}{@{}lcccc@{}}
		\toprule
		& \# on-chain tx & Gas    &  ETH    & \# off-chain message \\ \midrule
		Create state channel  & 10                       & 586180 & 0.0117 & 0                    \\
		Double auction process     & 0                        & 0      & 0      & 81000                 \\
		Close state channel & 10                       & 547240 & 0.0109 & 0                    \\ \bottomrule
	\end{tabular}
\end{table}

For a more detailed analysis, we look into other scenarios and measure some aspects of the system. One scenario to consider is when a party becomes dishonest and attempts to close the state channel with an outdated state. In order to do that, the dishonest party has to submit the outdated state using the $state\_submit$ function of the Judge contract, this costs about 52845 gas. After that, another party would submit a more recent state that can overwrite that outdated state, this costs another 52845 gas. Consequently, the cost for settling a disagreement would be about 0.0021 ETH for both the dishonest and honest party.

Another scenario that is worth looking into is the dynamic leave of parties. The two cases supported by our system are (1) eliminating dishonest parties and (2) voluntarily aborting the auction process. In the former case, when eliminating a party via the $state\_submit$ function, it requires 9 on-chain transactions that cost a total of 0.0122 ETH. In the latter case, using the Revocation protocol, only the leaving party needs to issue a transaction that costs about 0.0011 ETH.

To analyze the execution cost, we focus on the cost of computing digital signatures and message complexity since these are additional work that the parties have to perform apart from computing best responses. As in the protocol design, in each iteration, each party has to sign the current state and broadcast it to all other parties. A state of an iteration is defined as the list of parties' best responses and it is about 80 bytes (each party's bid profile is 8 bytes). The time it takes to sign the state is about 1.4 ms and to verify the signature is about 0.8 ms. As for broadcasting bids among parties, a commitment of 32 bytes (using SHA-256 hash function) is sent and later accompanied by the opening of 72 bytes (including the bid and the random string). This incurs a total of 270 off-chain messages that are transmitted per iteration with the total size of 23 KB. If we consider the whole auction process that involves 300 iterations, the total size of the messages transmitted among the parties is only 6.9 MB. This emphasizes the practicality of our proposed solution because of low transmission overhead.

Finally, we measure the end-to-end latency of the auction process. For the creating and closing state channel, each will take one block time (i.e., about 15 seconds in Ethereum). The 300 iterations of determining best responses takes only about 8.3 minutes. To test the scalability, we observe how the system latency changes when we increase the number of parties. For a fair comparison, we fix the number of iterations for the double auction process to 300. In \Cref{fig:scale}, we show the end-to-end latency with 10, 40, 70, and 100 parties. The total time to create and close the state channel remains the same because each operation only needs one block time as a block in Ethereum can store about 380 transactions.  When we increase the number of parties to 40, the system needs 833.4 seconds to finish the auction process. The rise in the latency comes from the fact that the size of the state at each iteration increases (from 80 bytes to 320 bytes), hence, it takes longer to transmit the state as well as to compute the digital signatures. However, we note that with 40 parties, this latency is only 1.6 times as much as the latency with 10 parties. Likewise, when we increase the number of parties by 9 times (to 100 parties), the latency only increases by 2.9 times. Therefore, we can see that the proposed solution is able to scale with the number of parties.

To better capture the performance of our system under heavy load, \Cref{fig:scale1} illustrates the end-to-end delay with thousands of trading parties over 300 iterations. For creating and closing state channel, it would take 6, 12, 16, 22, and 28 block time with 1000, 2000, 3000, 4000, and 5000 parties, respectively. As can be seen, the latency increases linearly with the number of parties. Under this load, the size of the state reaches 40000 bytes at 5000 parties, and the end-to-end latency is dominated by the time it takes to transmit the state at each iteration. Therefore, the performance of our system is only limited by the underlying communication channel.

\begin{figure}
    \centering
    \subfloat[]{\includegraphics[width=0.5\linewidth]{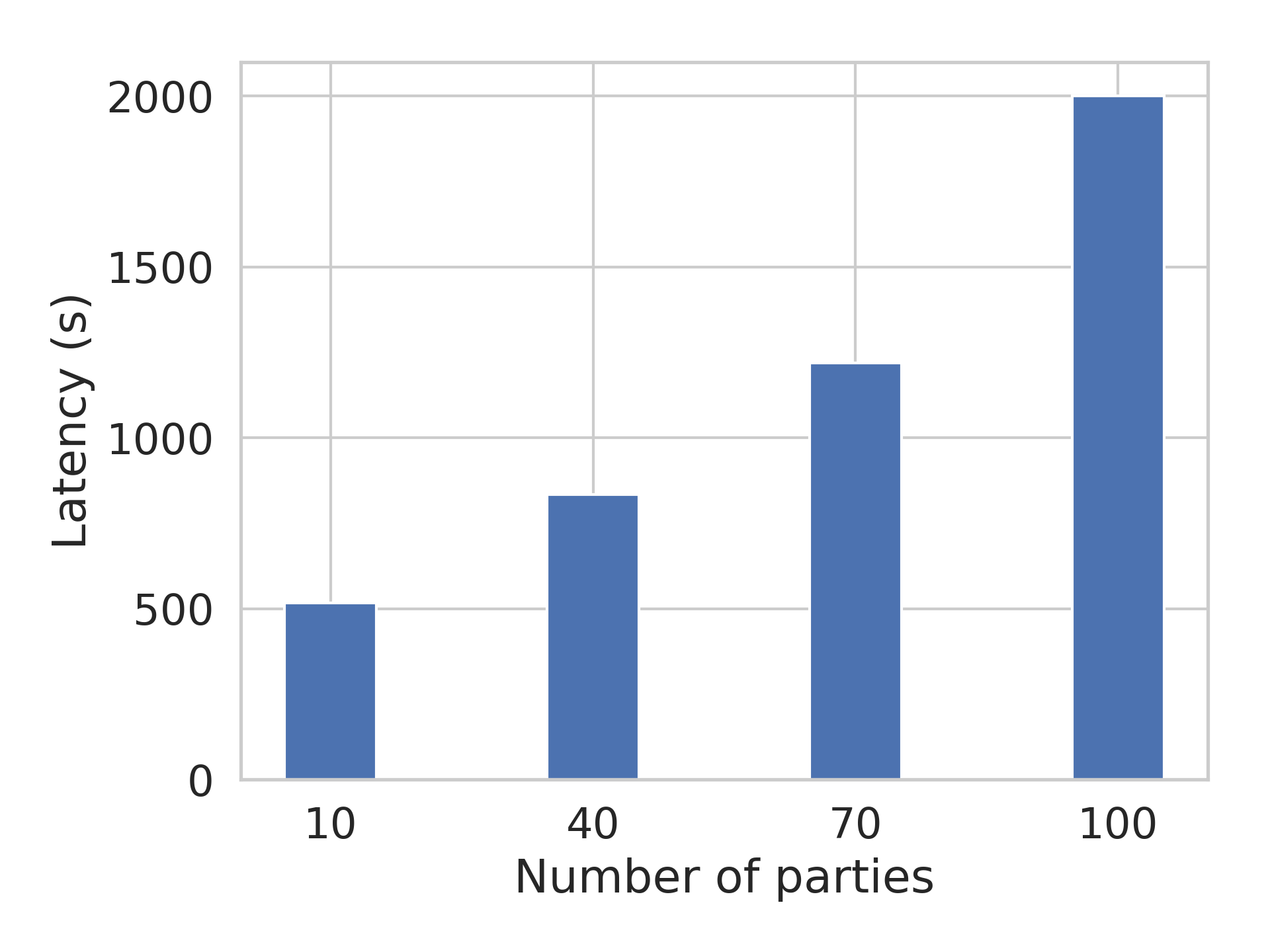}
    \label{fig:scale}
    }
    \subfloat[]{\includegraphics[width=0.5\linewidth]{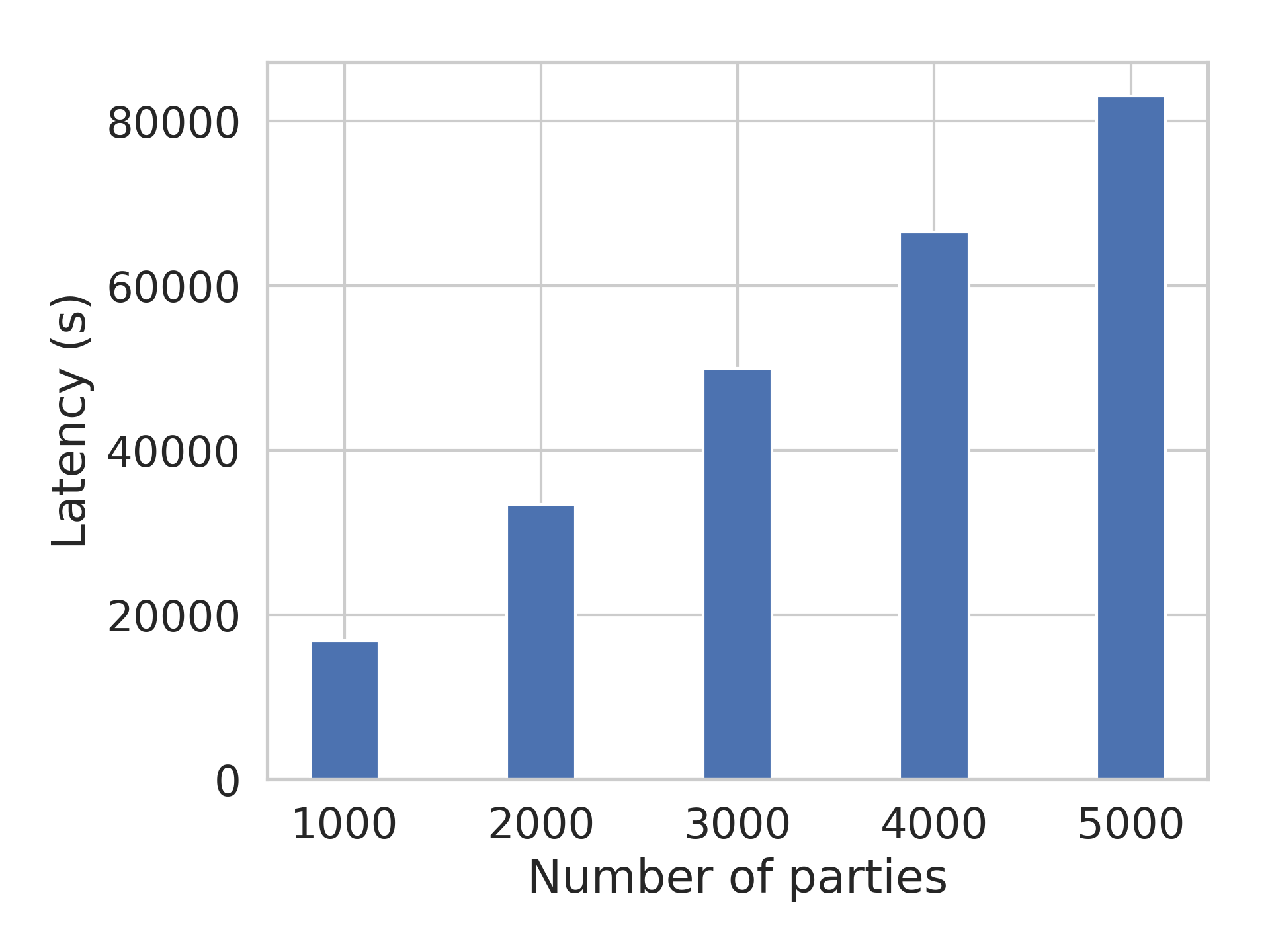}
    \label{fig:scale1}
    }
    \caption{End-to-end latency}
    
\end{figure}

In summary, the proof-of-concept implementation has demonstrated that our solution is both feasible and practical, all the functionalities work according to the design in \Cref{sec:state}. With comparison to the straw-man design, our implementation has resulted in a gigantic saving of money and time. Moreover, as the proposed solution induces a relatively small overhead, it is able support a high number of trading parties. 

\section{Conclusion}\label{sec:conclude}
In this paper, we have proposed a novel framework based on blockchain that enables a complete decentralized and trustless iterative double auction. That is, all parties can participate in the auction process without having to rely on an auctioneer and they do not have to trust one another. With an extension of the state channel technology, we were able to specify a protocol that reduces the blockchain transactions to avoid high transaction fee and latency. We have provided a formal specification of the framework and our protocol was proven to be secured in the UC model. Finally, we have developed a proof-of-concept implementation and perform experiments to validate the feasibility and practicality of our solution.


\bibliographystyle{ACM-Reference-Format}
\bibliography{bibliography}


\begin{thebibliography}{30}


\ifx \showCODEN    \undefined \def \showCODEN     #1{\unskip}     \fi
\ifx \showDOI      \undefined \def \showDOI       #1{#1}\fi
\ifx \showISBNx    \undefined \def \showISBNx     #1{\unskip}     \fi
\ifx \showISBNxiii \undefined \def \showISBNxiii  #1{\unskip}     \fi
\ifx \showISSN     \undefined \def \showISSN      #1{\unskip}     \fi
\ifx \showLCCN     \undefined \def \showLCCN      #1{\unskip}     \fi
\ifx \shownote     \undefined \def \shownote      #1{#1}          \fi
\ifx \showarticletitle \undefined \def \showarticletitle #1{#1}   \fi
\ifx \showURL      \undefined \def \showURL       {\relax}        \fi
\providecommand\bibfield[2]{#2}
\providecommand\bibinfo[2]{#2}
\providecommand\natexlab[1]{#1}
\providecommand\showeprint[2][]{arXiv:#2}

\bibitem[\protect\citeauthoryear{Aitzhan and Svetinovic}{Aitzhan and
  Svetinovic}{2016}]%
        {aitzhan2016security}
\bibfield{author}{\bibinfo{person}{Nurzhan~Zhumabekuly Aitzhan} {and}
  \bibinfo{person}{Davor Svetinovic}.} \bibinfo{year}{2016}\natexlab{}.
\newblock \showarticletitle{Security and privacy in decentralized energy
  trading through multi-signatures, blockchain and anonymous messaging
  streams}.
\newblock \bibinfo{journal}{\emph{IEEE Transactions on Dependable and Secure
  Computing}} (\bibinfo{year}{2016}).
\newblock


\bibitem[\protect\citeauthoryear{Azaria, Ekblaw, Vieira, and Lippman}{Azaria
  et~al\mbox{.}}{2016}]%
        {azaria2016medrec}
\bibfield{author}{\bibinfo{person}{Asaph Azaria}, \bibinfo{person}{Ariel
  Ekblaw}, \bibinfo{person}{Thiago Vieira}, {and} \bibinfo{person}{Andrew
  Lippman}.} \bibinfo{year}{2016}\natexlab{}.
\newblock \showarticletitle{Medrec: Using blockchain for medical data access
  and permission management}. In \bibinfo{booktitle}{\emph{Open and Big Data
  (OBD), International Conference on}}. IEEE, \bibinfo{pages}{25--30}.
\newblock


\bibitem[\protect\citeauthoryear{Canetti}{Canetti}{2001}]%
        {canetti2001universally}
\bibfield{author}{\bibinfo{person}{Ran Canetti}.}
  \bibinfo{year}{2001}\natexlab{}.
\newblock \showarticletitle{Universally composable security: A new paradigm for
  cryptographic protocols}. In \bibinfo{booktitle}{\emph{Proceedings 2001 IEEE
  International Conference on Cluster Computing}}. IEEE,
  \bibinfo{pages}{136--145}.
\newblock


\bibitem[\protect\citeauthoryear{Dinh and Thai}{Dinh and Thai}{2018}]%
        {dinh2018ai}
\bibfield{author}{\bibinfo{person}{Thang~N Dinh} {and} \bibinfo{person}{My~T
  Thai}.} \bibinfo{year}{2018}\natexlab{}.
\newblock \showarticletitle{Ai and blockchain: A disruptive integration}.
\newblock \bibinfo{journal}{\emph{Computer}} \bibinfo{volume}{51},
  \bibinfo{number}{9} (\bibinfo{year}{2018}), \bibinfo{pages}{48--53}.
\newblock


\bibitem[\protect\citeauthoryear{Dziembowski, Eckey, Faust, and
  Malinowski}{Dziembowski et~al\mbox{.}}{2019}]%
        {dziembowski2019perun}
\bibfield{author}{\bibinfo{person}{Stefan Dziembowski}, \bibinfo{person}{Lisa
  Eckey}, \bibinfo{person}{Sebastian Faust}, {and} \bibinfo{person}{Daniel
  Malinowski}.} \bibinfo{year}{2019}\natexlab{}.
\newblock \showarticletitle{Perun: Virtual payment hubs over cryptocurrencies}.
  In \bibinfo{booktitle}{\emph{2019 IEEE Symposium on Security and Privacy
  (SP)}}. IEEE, \bibinfo{pages}{106--123}.
\newblock


\bibitem[\protect\citeauthoryear{Dziembowski, Faust, and
  Host\'{a}kov\'{a}}{Dziembowski et~al\mbox{.}}{2018}]%
        {Dziembowski:2018:GSC:3243734.3243856}
\bibfield{author}{\bibinfo{person}{Stefan Dziembowski},
  \bibinfo{person}{Sebastian Faust}, {and} \bibinfo{person}{Kristina
  Host\'{a}kov\'{a}}.} \bibinfo{year}{2018}\natexlab{}.
\newblock \showarticletitle{General State Channel Networks}. In
  \bibinfo{booktitle}{\emph{Proceedings of the 2018 ACM SIGSAC Conference on
  Computer and Communications Security}} \emph{(\bibinfo{series}{CCS '18})}.
  \bibinfo{publisher}{ACM}, \bibinfo{address}{New York, NY, USA},
  \bibinfo{pages}{949--966}.
\newblock
\showISBNx{978-1-4503-5693-0}
\urldef\tempurl%
\url{https://doi.org/10.1145/3243734.3243856}
\showDOI{\tempurl}


\bibitem[\protect\citeauthoryear{Ethereum}{Ethereum}{[n. d.]}]%
        {ethereum}
\bibfield{author}{\bibinfo{person}{Ethereum}.} \bibinfo{year}{[n.
  d.]}\natexlab{}.
\newblock \bibinfo{title}{ethereum/sharding}.
\newblock
\newblock
\urldef\tempurl%
\url{https://github.com/ethereum/sharding/blob/develop/docs/doc.md}
\showURL{%
\tempurl}


\bibitem[\protect\citeauthoryear{Faqiry and Das}{Faqiry and Das}{2016}]%
        {faqiry2016double}
\bibfield{author}{\bibinfo{person}{M~Nazif Faqiry} {and}
  \bibinfo{person}{Sanjoy Das}.} \bibinfo{year}{2016}\natexlab{}.
\newblock \showarticletitle{Double-sided energy auction in microgrid:
  Equilibrium under price anticipation}.
\newblock \bibinfo{journal}{\emph{IEEE Access}}  \bibinfo{volume}{4}
  (\bibinfo{year}{2016}), \bibinfo{pages}{3794--3805}.
\newblock


\bibitem[\protect\citeauthoryear{FRIEDMAN}{FRIEDMAN}{1993}]%
        {friedman1993double}
\bibfield{author}{\bibinfo{person}{D FRIEDMAN}.}
  \bibinfo{year}{1993}\natexlab{}.
\newblock \showarticletitle{The double auction market institution: A survey}.
\newblock \bibinfo{journal}{\emph{The Double Auction Market: Institutions,
  Theories, and Evidence}} (\bibinfo{year}{1993}), \bibinfo{pages}{3--25}.
\newblock


\bibitem[\protect\citeauthoryear{Hewitt, Bishop, and Steiger}{Hewitt
  et~al\mbox{.}}{1973}]%
        {hewitt1973universal}
\bibfield{author}{\bibinfo{person}{C Hewitt}, \bibinfo{person}{P Bishop}, {and}
  \bibinfo{person}{R Steiger}.} \bibinfo{year}{1973}\natexlab{}.
\newblock \showarticletitle{A Universal Modular Actor Formalism for Artificial
  Intelligence. IJCAI3}. In \bibinfo{booktitle}{\emph{Proceedings of the 3rd
  International Joint Conference on Artificial Intelligence}}.
  \bibinfo{pages}{235--245}.
\newblock


\bibitem[\protect\citeauthoryear{Iosifidis, Gao, Huang, and
  Tassiulas}{Iosifidis et~al\mbox{.}}{2013}]%
        {iosifidis2013iterative}
\bibfield{author}{\bibinfo{person}{George Iosifidis}, \bibinfo{person}{Lin
  Gao}, \bibinfo{person}{Jianwei Huang}, {and} \bibinfo{person}{Leandros
  Tassiulas}.} \bibinfo{year}{2013}\natexlab{}.
\newblock \showarticletitle{An iterative double auction for mobile data
  offloading}. In \bibinfo{booktitle}{\emph{2013 11th International Symposium
  and Workshops on Modeling and Optimization in Mobile, Ad Hoc and Wireless
  Networks (WiOpt)}}. IEEE, \bibinfo{pages}{154--161}.
\newblock


\bibitem[\protect\citeauthoryear{Iosifidis and Koutsopoulos}{Iosifidis and
  Koutsopoulos}{2010}]%
        {iosifidis2010double}
\bibfield{author}{\bibinfo{person}{George Iosifidis} {and}
  \bibinfo{person}{Iordanis Koutsopoulos}.} \bibinfo{year}{2010}\natexlab{}.
\newblock \showarticletitle{Double auction mechanisms for resource allocation
  in autonomous networks}.
\newblock \bibinfo{journal}{\emph{IEEE Journal on Selected Areas in
  Communications}} \bibinfo{volume}{28}, \bibinfo{number}{1}
  (\bibinfo{year}{2010}).
\newblock


\bibitem[\protect\citeauthoryear{Johnson, Menezes, and Vanstone}{Johnson
  et~al\mbox{.}}{2001}]%
        {johnson2001elliptic}
\bibfield{author}{\bibinfo{person}{Don Johnson}, \bibinfo{person}{Alfred
  Menezes}, {and} \bibinfo{person}{Scott Vanstone}.}
  \bibinfo{year}{2001}\natexlab{}.
\newblock \showarticletitle{The elliptic curve digital signature algorithm
  (ECDSA)}.
\newblock \bibinfo{journal}{\emph{International journal of information
  security}} \bibinfo{volume}{1}, \bibinfo{number}{1} (\bibinfo{year}{2001}),
  \bibinfo{pages}{36--63}.
\newblock


\bibitem[\protect\citeauthoryear{Kang, Yu, Huang, Maharjan, Zhang, and
  Hossain}{Kang et~al\mbox{.}}{2017}]%
        {kang2017enabling}
\bibfield{author}{\bibinfo{person}{Jiawen Kang}, \bibinfo{person}{Rong Yu},
  \bibinfo{person}{Xumin Huang}, \bibinfo{person}{Sabita Maharjan},
  \bibinfo{person}{Yan Zhang}, {and} \bibinfo{person}{Ekram Hossain}.}
  \bibinfo{year}{2017}\natexlab{}.
\newblock \showarticletitle{Enabling localized peer-to-peer electricity trading
  among plug-in hybrid electric vehicles using consortium blockchains}.
\newblock \bibinfo{journal}{\emph{IEEE Transactions on Industrial Informatics}}
  \bibinfo{volume}{13}, \bibinfo{number}{6} (\bibinfo{year}{2017}),
  \bibinfo{pages}{3154--3164}.
\newblock


\bibitem[\protect\citeauthoryear{Malavolta, Moreno-Sanchez, Kate, Maffei, and
  Ravi}{Malavolta et~al\mbox{.}}{2017}]%
        {malavolta2017concurrency}
\bibfield{author}{\bibinfo{person}{Giulio Malavolta}, \bibinfo{person}{Pedro
  Moreno-Sanchez}, \bibinfo{person}{Aniket Kate}, \bibinfo{person}{Matteo
  Maffei}, {and} \bibinfo{person}{Srivatsan Ravi}.}
  \bibinfo{year}{2017}\natexlab{}.
\newblock \showarticletitle{Concurrency and privacy with payment-channel
  networks}. In \bibinfo{booktitle}{\emph{Proceedings of the 2017 ACM SIGSAC
  Conference on Computer and Communications Security}}. ACM,
  \bibinfo{pages}{455--471}.
\newblock


\bibitem[\protect\citeauthoryear{Miller, Bentov, Kumaresan, and McCorry}{Miller
  et~al\mbox{.}}{2019}]%
        {miller2017sprites}
\bibfield{author}{\bibinfo{person}{Andrew Miller}, \bibinfo{person}{Iddo
  Bentov}, \bibinfo{person}{Ranjit Kumaresan}, {and} \bibinfo{person}{Patrick
  McCorry}.} \bibinfo{year}{2019}\natexlab{}.
\newblock \showarticletitle{Sprites and state channels: Payment networks that
  go faster than lightning}.
\newblock \bibinfo{journal}{\emph{Financial Cryptography and Data Security}}
  (\bibinfo{year}{2019}).
\newblock


\bibitem[\protect\citeauthoryear{Ming, Yang, Li, Wang, Xu, Xu, and Cui}{Ming
  et~al\mbox{.}}{[n. d.]}]%
        {mingblockcloud}
\bibfield{author}{\bibinfo{person}{Zhongxing Ming}, \bibinfo{person}{Shu Yang},
  \bibinfo{person}{Qi Li}, \bibinfo{person}{Dan Wang}, \bibinfo{person}{Mingwei
  Xu}, \bibinfo{person}{Ke Xu}, {and} \bibinfo{person}{Laizhong Cui}.}
  \bibinfo{year}{[n. d.]}\natexlab{}.
\newblock \showarticletitle{Blockcloud: A Blockchain-based Service-centric
  Network Stack}.
\newblock  (\bibinfo{year}{[n. d.]}).
\newblock


\bibitem[\protect\citeauthoryear{Nakamoto}{Nakamoto}{2008}]%
        {nakamoto2008bitcoin}
\bibfield{author}{\bibinfo{person}{Satoshi Nakamoto}.}
  \bibinfo{year}{2008}\natexlab{}.
\newblock \showarticletitle{Bitcoin: A peer-to-peer electronic cash system}.
\newblock  (\bibinfo{year}{2008}).
\newblock


\bibitem[\protect\citeauthoryear{Nguyen, Nguyen, Dinh, and Thai}{Nguyen
  et~al\mbox{.}}{2019}]%
        {nguyen2019optchain}
\bibfield{author}{\bibinfo{person}{Lan~N Nguyen}, \bibinfo{person}{Truc~DT
  Nguyen}, \bibinfo{person}{Thang~N Dinh}, {and} \bibinfo{person}{My~T Thai}.}
  \bibinfo{year}{2019}\natexlab{}.
\newblock \showarticletitle{OptChain: Optimal Transactions Placement for
  Scalable Blockchain Sharding}. In \bibinfo{booktitle}{\emph{2019 IEEE 39th
  International Conference on Distributed Computing Systems (ICDCS)}}. IEEE,
  \bibinfo{pages}{525--535}.
\newblock


\bibitem[\protect\citeauthoryear{Nguyen, Pham, and Thai}{Nguyen
  et~al\mbox{.}}{2018}]%
        {nguyen2018leveraging}
\bibfield{author}{\bibinfo{person}{Truc~DT Nguyen}, \bibinfo{person}{Hoang-Anh
  Pham}, {and} \bibinfo{person}{My~T Thai}.} \bibinfo{year}{2018}\natexlab{}.
\newblock \showarticletitle{Leveraging Blockchain to Enhance Data Privacy in
  IoT-Based Applications}. In \bibinfo{booktitle}{\emph{International
  Conference on Computational Social Networks}}. Springer,
  \bibinfo{pages}{211--221}.
\newblock


\bibitem[\protect\citeauthoryear{Parsons, Marcinkiewicz, Niu, and
  Phelps}{Parsons et~al\mbox{.}}{2006}]%
        {parsons2006everything}
\bibfield{author}{\bibinfo{person}{Simon Parsons}, \bibinfo{person}{Marek
  Marcinkiewicz}, \bibinfo{person}{Jinzhong Niu}, {and} \bibinfo{person}{Steve
  Phelps}.} \bibinfo{year}{2006}\natexlab{}.
\newblock \showarticletitle{Everything you wanted to know about double
  auctions, but were afraid to (bid or) ask}.
\newblock  (\bibinfo{year}{2006}).
\newblock


\bibitem[\protect\citeauthoryear{Saad, Cook, Nguyen, Thai, and Mohaisen}{Saad
  et~al\mbox{.}}{2019}]%
        {saad2019partition}
\bibfield{author}{\bibinfo{person}{Muhammad Saad}, \bibinfo{person}{Victor
  Cook}, \bibinfo{person}{Lan Nguyen}, \bibinfo{person}{My~T Thai}, {and}
  \bibinfo{person}{Aziz Mohaisen}.} \bibinfo{year}{2019}\natexlab{}.
\newblock \showarticletitle{Partitioning Attacks on Bitcoin: Colliding Space,
  Time, and Logic}. In \bibinfo{booktitle}{\emph{2019 IEEE 39th International
  Conference on Distributed Computing Systems (ICDCS)}}. IEEE.
\newblock


\bibitem[\protect\citeauthoryear{Suite}{Suite}{[n. d.]a}]%
        {ganache}
\bibfield{author}{\bibinfo{person}{Truffle Suite}.} \bibinfo{year}{[n.
  d.]}\natexlab{a}.
\newblock \bibinfo{title}{Ganache}.
\newblock
\newblock
\urldef\tempurl%
\url{https://www.trufflesuite.com/ganache}
\showURL{%
\tempurl}


\bibitem[\protect\citeauthoryear{Suite}{Suite}{[n. d.]b}]%
        {truffle}
\bibfield{author}{\bibinfo{person}{Truffle Suite}.} \bibinfo{year}{[n.
  d.]}\natexlab{b}.
\newblock \bibinfo{title}{Sweet Tools for Smart Contracts}.
\newblock
\newblock
\urldef\tempurl%
\url{https://www.trufflesuite.com/}
\showURL{%
\tempurl}


\bibitem[\protect\citeauthoryear{Sun, Huang, Xing, Chen, Zheng, Xu, and
  Huang}{Sun et~al\mbox{.}}{2014}]%
        {sun2014sprite}
\bibfield{author}{\bibinfo{person}{Yu-e Sun}, \bibinfo{person}{He Huang},
  \bibinfo{person}{Kai Xing}, \bibinfo{person}{Zhili Chen},
  \bibinfo{person}{Jianying Zheng}, \bibinfo{person}{Hongli Xu}, {and}
  \bibinfo{person}{Liusheng Huang}.} \bibinfo{year}{2014}\natexlab{}.
\newblock \showarticletitle{SPRITE: a novel strategy-proof multi-unit double
  auction scheme for spectrum allocation in ubiquitous communications}.
\newblock \bibinfo{journal}{\emph{Personal and ubiquitous computing}}
  \bibinfo{volume}{18}, \bibinfo{number}{4} (\bibinfo{year}{2014}),
  \bibinfo{pages}{939--950}.
\newblock


\bibitem[\protect\citeauthoryear{Swan}{Swan}{2015}]%
        {swan2015blockchain}
\bibfield{author}{\bibinfo{person}{Melanie Swan}.}
  \bibinfo{year}{2015}\natexlab{}.
\newblock \bibinfo{booktitle}{\emph{Blockchain: Blueprint for a new economy}}.
\newblock \bibinfo{publisher}{" O'Reilly Media, Inc."}.
\newblock


\bibitem[\protect\citeauthoryear{Thakur, Hayes, and Breslin}{Thakur
  et~al\mbox{.}}{2018}]%
        {thakur2018distributed}
\bibfield{author}{\bibinfo{person}{Subhasis Thakur}, \bibinfo{person}{Barry~P
  Hayes}, {and} \bibinfo{person}{John~G Breslin}.}
  \bibinfo{year}{2018}\natexlab{}.
\newblock \showarticletitle{Distributed double auction for peer to peer energy
  trade using blockchains}. In \bibinfo{booktitle}{\emph{2018 5th International
  Symposium on Environment-Friendly Energies and Applications (EFEA)}}. IEEE,
  \bibinfo{pages}{1--8}.
\newblock


\bibitem[\protect\citeauthoryear{Wang, Wang, and Zhou}{Wang
  et~al\mbox{.}}{2018}]%
        {wang2018decentralized}
\bibfield{author}{\bibinfo{person}{Jian Wang}, \bibinfo{person}{Qianggang
  Wang}, {and} \bibinfo{person}{Niancheng Zhou}.}
  \bibinfo{year}{2018}\natexlab{}.
\newblock \showarticletitle{A Decentralized Electricity Transaction Mode of
  Microgrid Based on Blockchain and Continuous Double Auction}. In
  \bibinfo{booktitle}{\emph{2018 IEEE Power \& Energy Society General Meeting
  (PESGM)}}. IEEE, \bibinfo{pages}{1--5}.
\newblock


\bibitem[\protect\citeauthoryear{Wood}{Wood}{2014}]%
        {wood2014ethereum}
\bibfield{author}{\bibinfo{person}{Gavin Wood}.}
  \bibinfo{year}{2014}\natexlab{}.
\newblock \showarticletitle{Ethereum: A secure decentralised generalised
  transaction ledger}.
\newblock \bibinfo{journal}{\emph{Ethereum project yellow paper}}
  \bibinfo{volume}{151} (\bibinfo{year}{2014}), \bibinfo{pages}{1--32}.
\newblock


\bibitem[\protect\citeauthoryear{Zou, Ma, and Liu}{Zou et~al\mbox{.}}{2017}]%
        {zou2016efficient}
\bibfield{author}{\bibinfo{person}{Suli Zou}, \bibinfo{person}{Zhongjing Ma},
  {and} \bibinfo{person}{Xiangdong Liu}.} \bibinfo{year}{2017}\natexlab{}.
\newblock \showarticletitle{Resource allocation game under double-sided auction
  mechanism: Efficiency and convergence}.
\newblock \bibinfo{journal}{\emph{IEEE Trans. Automat. Control}}
  \bibinfo{volume}{63}, \bibinfo{number}{5}, \bibinfo{pages}{1273--1287}.
\newblock


\end{thebibliography}


\end{document}